\definecolor{hellgrau}{gray}{0.4}
\newcommand{\e}{\varepsilon}
\newcommand{\N}{\mathbb{N}}
\newcommand{\A}{\mathcal{A}}
\newcommand{\pr}{\mathbb{P}_\theta}
\newcommand{\esp}{\mathbb{E}_\theta}
\newcommand{\pro}{\mathbb{P}_{0}}
\newcommand{\espo}{\mathbb{E}_{0}}
\newcommand{\Q}{Q_\theta}
\newcommand{\Qo}{Q_{\theta_0}}
\newcommand{\T}{\mathcal{T}}                         
\newcommand{\X}{\mathbb{X}}
\newcommand{\1}[1]{1\! \mathrm{l}\{ #1\}}
\newcommand{\E}{\mathcal{E}}
\newtheorem{thm}{Theorem}
\newtheorem{defi}{Definition}
\newtheorem{assumption}{Assumption}
\newtheorem{remark}{Remark}
\newtheorem{lemma}{Lemma}
\newenvironment{proof}[1][Proof] {{\bf{#1}}}{\hfill $\square$}
\newcommand{\smalleq}{\fontsize{14.4}{18}\selectfont} 
\begin{document}
\title{Parameter Estimation in multiple-hidden i.i.d. models from biological multiple alignment}
\author{Ana~Arribas-Gil
\thanks{Departamento de Estadística. Universidad Carlos III de Madrid. C/ Madrid 126, 28903 Getafe, Spain. E-mail: aarribas@est-econ.uc3m.es}}
\lhead[\fancyplain{}{ \thepage}]%
      {\fancyplain{}{\emph A. Arribas-Gil}}
\rhead[\fancyplain{}{\emph Parameter estimation in multiple-hidden i.i.d. models}]%
      {\fancyplain{}{ \thepage}}
\maketitle

\begin{abstract}
In this work we deal with parameter estimation
in a latent variable model, namely the multiple-hidden i.i.d. model, which is derived from multiple alignment algorithms.
We first provide a rigorous formalism for the homology structure of $k$ sequences related
by a star-shaped phylogenetic tree in the context of multiple alignment based on indel evolution models. 
We discuss  possible definitions of likelihoods and compare them to the criterion used in multiple alignment algorithms.
Existence of two different Information divergence rates is
established and a divergence property is shown under additional
assumptions. This would yield consistency for the parameter in
parametrization schemes for which the divergence property holds.
We finally extend the definition of the multiple-hidden i.i.d. model and the results obtained to the case in which the sequences are related by an arbitrary phylogenetic tree. Simulations illustrate different cases which are not
covered by our results.
\end{abstract}


\section{Introduction}
Biological sequence alignment is one of the fundamental tasks in bioinformatics. Sequences are aligned to identify regions of similarities that can be used to determine structural and functional motifs in a sequence, to infer gene functions or to derive evolutionary relationships between sequences. Aligning two sequences, which are supposed to descend from a common ancestor, consists in retrieving the places where substitutions, insertions and deletions have occurred during evolution. The first alignment methods, namely scored-based methods, used dynamic programming algorithms with fixed score parameters to find an optimal alignment (see Durbin {\em et al.}, 1998, for an overview). But since an alignment aims at reconstructing the evolution history of the sequences, choosing these score parameters in the most objective way to have an evolutionary meaning seems to be an important issue.
\citet{TKF1} proposed the first rigorous model of sequence evolution including {\em indels} (insertions and deletions), referred to as the TKF91 model. Based on this model, they were the first to provide a maximum likelihood approach to jointly estimate the alignment of a pair of DNA sequences and the evolution parameters. The alignment problem in this context fits into the pair hidden Markov model (pair-HMM), as first described in \citet{Durbin}, ensuring the existence of efficient algorithms based on dynamic programming methods to compute the likelihood of two sequences and retrieve an alignment. That is one of the reasons why TKF91 based alignment methods have become popular. Indeed, they have been further developed in \citet{HeinWiuf}, \citet{Metzler1}, \citet{Metzler2} and \citet{Miklos} among others, and this despite the lack of theoretical support for the estimation procedures in this framework during years. \citet{Argamat} were the first to study the statistical properties of parameter estimation procedures in pair-HMMs.

In the last years these methods have also been extended to the case of multiple alignment. In this context we deal with more than two sequences and we have to take into account the evolutionary relationships between the sequences, which are represented by a phylogenetic tree. Multiple alignment methods applying the TKF91 model on a tree are for instance those of \citet{Steel}, \citet{HolmesBruno}, \citet{Hein03} and \citet{Hein4}. They generalize pair-HMMs to more complex hidden variable models and propose maximum likelihood or Bayesian approaches for the joint estimation of evolution parameters and multiple alignments given a phylogenetic tree. However, since both alignment and phylogenetic tree aims at reconstructing the evolutionary history of the sequences, estimating the alignment from a fixed phylogenetic tree may biased the result. The ideal procedure would consist in jointly estimating alignments and phylogenetic trees from a set of unaligned sequences. This problem has been recently tackled, in the context of indel evolution models, by \citet{Metzler3}, \citet{Hein5} and \citet{StatAlign}. However, as it was the case during years for the pair-HMMs, no theoretical support is provided for the estimation procedures in any of these contexts.

This work is concerned with the study of statistical properties of parameter estimation in latent variable models derived from multiple alignment algorithms where the phylogenetic tree relating the observed sequences is supposed to be known. The paper is organized as follows.

In Section 2, we motivate the problem, discuss some models of sequence evolution and describe the homology structure in the context of multiple alignment of a set of sequences related by a star-shaped phylogenetic tree and evolving under the TKF91 model of sequence evolution.

In Section 3 we present the multiple-hidden i.i.d. model on a star tree. We discuss possible definitions of likelihoods and compare them with the criterion which is actually considered in multiple alignment algorithms. We analyze the case in which only two sequences are considered to show that our model is consistent with the pair-HMM.

In Section 4, we investigate asymptotic properties of estimators under the hidden i.i.d. model for the definitions of likelihoods that we have considered.
We first prove the existence of {\em Information divergence rates}, which are the difference between the limiting values of the log-likelihoods at the (unknown) true parameter and at another parameter value. We then prove that they are uniquely minimized at the true value of the parameter (divergence property) for some parametrization schemes. Following classical arguments, this would yield consistency for the parameter in those cases in which the divergence property holds.

In Section 5 we extend the definitions of the multiple-hidden i.i.d. model and the results obtained to the general case in which the sequences are related by an arbitrary phylogenetic tree.

Finally, in Section 6, we illustrate via some simulations the behavior of the divergence rates in different cases in which the divergence property is not established. The paper ends with a discussion on this work.

\section{Motivation: models of sequence evolution and the homology structure}

In the multiple alignment problem the observations consist in $k$ ($k>2$)
sequences $X^1_{1:n_1},..., X^k_{1:n_k}$, where $n_i$ is the length of sequence $i$ and
$X^i_{1:n_i}=X^i_1\dots X^i_{n_i}$, with values in a finite
alphabet ${\cal A}$ (for instance $\A=\{A,C,G,T\}$ for DNA sequences). It is
assumed that the sequences are related by a phylogenetic tree,
that is, a tree where the nodes represent the sequences and the
edges represent the evolutionary relationships between them. The
observed sequences are placed at the $k$ leaves of the tree,
whereas the inner nodes stand for ancestral (non-observable)
sequences. The most ancestral sequence is placed at the root,
${\cal R}$, of the tree. The choice of the root assigns to each
edge a direction (from the root to the leaves) and to each inner
node its descendants nodes, but since the evolutionary process
between the sequences is usually assumed to be time reversible,
the placement of the root node is irrelevant (cf. Thatte, 2006).
A path from the root to a leaf represents the evolution through
time and through a series of intermediate sequences of the
ancestral sequence, leading to the corresponding observed
sequence. The evolution on each edge (from its \emph{parent} node
to its \emph{child} node) is described by some evolution
process. We assume that the same evolution process works on
every edge of the tree. A main hypothesis is that the evolution
processes working on two edges with the same \emph{parent} node
are independent,
i.e. a sequence evolves independently to each one of its descendants.

\subsection{Models of sequence evolution}

Mutations in a sequence during the evolution process can be produced by many different factors. However, there are two evolutionary events that play a major role: substitutions of a nucleotide by a different one in a given position of a sequence, and insertions or deletions of single positions or sequence fragments.

The process of substitutions has been studied in depth during years, and is usually taken to be a continuous time Markov chain on the state space of nucleotides (Felsenstein, 2004; Tavaré, 1986). The process of insertions and deletions has not received the same attention and there is more place for discussion. \citet{TKF1} proposed in a pioneering paper the first indel evolution model, and since then many variants have been considered. The importance of this model is that it makes the alignment fit into the concept of pair-HMM, as we have already mentioned.

In the pair-HMM for pairwise sequence alignment the indel process and the substitution process are combined to model the whole evolution process.
Indeed, the hidden Markov chain corresponds to what we usually call the \emph{bare} alignment, that is, an alignment without specification of the particular nucleotides at each position of the sequences. Conditionally on a realization of this hidden process, the observed sequences are emitted according to the substitution model (see Durbin {\em et al.}, 1998, and Arribas-Gil {\em et al.}, 2006, for details).

So, in the pair-HMM the indel evolution process characterizes the hidden stochastic process of the alignment, whereas the substitution process corresponds to the emission functions of the observed sequences. As we will see, that is also the case for the multiple alignment model that we study in this paper.
Since the asymptotic properties of estimators in such a model are more related to the structure of the hidden process than to the emission functions, which can take a general form (see Arribas-Gil {\em et al.}, 2006), we will focus our attention on the indel process.

\subsubsection{The TKF91 model}
Let us briefly recall how the TKF91 model works on pairwise alignments. This model is
formulated in terms of \emph{links} and associated letters. To
each \emph{link} is associated a letter that undergoes changes,
independently of other letters, according to a reversible
substitution process. The insertion and deletion process is
described by a birth-death process on these \emph{links}. Indeed, a \emph{link}
and its associated letter is deleted at the rate $\mu>0$. While a
\emph{link} is present it gives rise to new \emph{links} at the rate $\lambda$.
A new \emph{link} is placed immediately to the right of the \emph{link} from
which it originated, and the associated letter is chosen from the
stationary distribution of the substitution process. At the very
left of the sequence is a so-called immortal \emph{link} that never dies
and gives rise to new \emph{links} at the rate $\lambda$. We need the
death rate per \emph{link} to exceed the birth rate per \emph{link} to have a
distribution of sequence lengths. Indeed, if $\lambda < \mu$ then
the equilibrium distribution of length sequence is geometric with
parameter $\lambda / \mu$.

Let $p^H_n(t)$ be the probability that a normal \emph{link} survives and
has $n$ descendants, including itself, after a time $t$. Let
$p^N_n(t)$ be the probability that a normal \emph{link} dies but leaves $n$
descendants after a time $t$. Finally let $p^I_n(t)$ be the
probability that an immortal \emph{link} has n descendants, including
itself, after a time $t$. Here $H$ stands for homologous, $N$ for
non-homologous and $I$ for immortal. We have:
\begin{equation}\label{TKFprocess}
\begin{array}{llll}
p^H_n(t) &= & e^{-\mu t}[1-\lambda
\beta(t)][\lambda \beta(t)]^{n-1} & \mbox{for } \,\,
 n\geq 1 \\
p^N_n(t) &= & \mu \beta(t) & \mbox{for }\,\,n=0 \\
&=& [1-e^{-\mu t}-\mu \beta(t)][1-\lambda \beta(t)][\lambda
\beta(t)]^{n-1} & \mbox{for } \,\,n\geq 1\\
 p^I_n(t) &= & [1-\lambda \beta(t)][\lambda \beta(t)]^{n-1} &
\mbox{for } \,\, n\geq 1
\end{array}
\end{equation}
where $$\beta(t)=\frac{1-e^{(\lambda-\mu)t}}{\mu-\lambda
e^{(\lambda-\mu )t}}.$$
Conceptually, $e^{-\mu t}$ is the
probability of ancestral residue survival, $\lambda \beta(t)$ is
the probability of more insertions given one or more existent
descendants and $\kappa(t):=\frac{1-e^{-\mu t}-\mu
\beta(t)}{1-e^{-\mu t}}$ is the probability of insertion given
that the ancestral residue did not survive. See \citet{TKF1} for
details.

If we want to investigate the asymptotic properties of parameter
estimators we must consider observed sequences of growing lengths.
However, this is not possible under the hypothesis of the TKF91
model. Indeed, the ancestral sequence length distribution depends
on $\lambda/\mu$, and so, for a given value of these parameters we
can not make the ancestral sequence length to tend to infinity. As
one would expect (and as we will show later) the lengths of the
observed sequences are equivalent to the length of the root
sequence, so under this setup we can not expect to observe
infinitely long sequences.

Following the ideas in Metzler (2003), we will consider the case in which the TKF91 model can
produce long sequences, that is, the case where $\lambda=\mu$.
With this configuration, finite length sequences are to be
considered as cut out of very much longer sequences between known
homologous positions. The length of the ancestral sequence is now
considered to be non random.

We will note $q^H_n(t)$ and $q^N_n(t)$ the probability
distributions of the number of descendants for a normal
\emph{link} under these assumptions. We do not need to consider
the distribution for the immortal \emph{link} anymore, since now
all the positions on the observed sequences are descendants of
normal \emph{links}.\\
Since $\lim_{\mu \rightarrow \lambda}\beta(t)=\frac{t}{1+\lambda
t}$ we get
\begin{eqnarray}\label{newprocess}
q^H_n(t) =\lim_{\mu \rightarrow \lambda} p^H_n(t) & =& e^{-\lambda
t}\frac{1}{1+\lambda t}\left(\frac{\lambda t}{1+\lambda
t}\right)^{n-1}\hspace{2.3cm}  \mbox{for } \,\,
 n\geq 1 \nonumber\\
q^N_n(t) =\lim_{\mu \rightarrow \lambda} p^N_n(t) &= & \frac{\lambda t}{1+\lambda t}
\hspace{5.4cm} \mbox{for }\,\,n=0
\\
                                                 &= & \left(\frac{1}{1+\lambda t}-e^{-\lambda t}\right)\frac{1}{1+\lambda t}
\left(\frac{\lambda t}{1+\lambda t}\right)^{n-1} \,\, \mbox{for } \,\,n\geq 1 \nonumber
\end{eqnarray}

The main drawback of the TKF91 model is that insertions and deletions can only be produced at one nucleotide at a time. More realistic indel evolution models based on the TKF91 model are, for instance, those of \citet{TKF2}, \citet{Miklos} or \citet{Ana_Metzler}. For the sake of simplicity, in this work we will just consider the TKF91 indel model. However, the homology structure and the multiple-hidden i.i.d. model presented here can be extended to the case in which other indel models are considered.

\subsection{A star tree}

Let us now consider a $k$-star phylogenetic tree, that is, a tree
with a root, $k$ leaves and no inner nodes. See Figure \ref{star}
for an example. We will note $t_i$, $i=1,\dots,k$, the branches
lengths, that is the evolutionary time separating each sequence to
the root. In this context, an alignment of the $k$ sequences and
the root consists in a composition of the $k$ pairwise alignments
of the root with any of the observed sequences. This is done as
follows. Two characters $X^i_j$ and $X^l_h$ will be aligned in the
same column if and only if they are homologous to the same
character of the root sequence. So there is a column for each
nucleotide at the root containing all its homologous positions on
the leaves, and between two columns of this kind, there is one
column for each inserted position on the leaves between the two
corresponding nucleotide positions at the root. Insertions to the
root sequence occur independently on each sequence and we assume
that the probability of having two insertions on different
sequences at the same time is 0. That is why insertion columns are
composed by one nucleotide position in some of the sequences and
gaps in all the others.
\begin{figure}
\begin{center}
\begin{minipage}{7cm}
\setlength{\unitlength}{1mm}
\begin{picture}(60,35)
\thicklines
\put(35,30){\line(-4,-3){31}}
\put(35,30){\line(-2,-3){16}}
\put(35,30){\line(-1,-3){8}}
\put(35,30){\line(1,-3){8}}
\put(35,30){\line(2,-3){16}}
\put(35,30){\line(4,-3){31}}
\put(34,32){${\cal R}$: \texttt{ACCT}}
\put(9,13){$t_1$}
\put(0,2){$X^1$} \put(-4.5,-2){\texttt{ACCGGT}}
\put(19,13){$t_2$}
\put(15,2){$X^2$} \put(13,-2){\texttt{ACT}}
\put(26,13){$t_3$}
\put(25,2){$X^3$} \put(23,-2){\texttt{ACT}}
\put(42,13){$t_4$}
\put(41,2){$X^4$} \put(34,-2){\texttt{GCCAT}}
\put(48,13){$t_5$}
\put(50,2){$X^5$} \put(49.5,-2){\texttt{CCT}}
\put(58,13){$t_6$}
\put(64,2){$X^6$} \put(61,-2){\texttt{ACCT}}
\end{picture}
 \end{minipage}
\hspace{0.4cm}\begin{minipage}{5.5cm}
{\scriptsize \textbf{Pairwise alignments:\vspace{0.2cm}\\ }
\begin{tabular}{rlrlrl}
\hspace{-0.2cm}${\cal R}$:&\texttt{ACC\textcolor[gray]{0.5}{--}T}&${\cal R}$:&\texttt{ACCT}&${\cal R}$:&\texttt{ACCT}\\
\hspace{-0.2cm}$X^1$:&     \texttt{ACC\textcolor[gray]{0.5}{GG}T}&$X^2$:&     \texttt{A-CT}&$X^3$:&     \texttt{AC-T}\\
\vspace{-0.2cm}\\
\hspace{-0.2cm}${\cal R}$:&\texttt{ACC\textcolor[gray]{0.5}{-}T}&${\cal R}$:&\texttt{ACCT}&${\cal R}$:&\texttt{ACCT}\\
\hspace{-0.2cm}$X^4$:&     \texttt{GCC\textcolor[gray]{0.5}{A}T}&$X^5$:&     \texttt{-CCT}&$X^6$:&     \texttt{ACCT}\\
\end{tabular}}\vspace{0.2cm}\\
{\scriptsize \begin{tabular}{l}\textbf{Multiple}\\ \textbf{alignment:} \end{tabular}
\begin{tabular}{rl}
$X^1$:&      \texttt{ACC\textcolor[gray]{0.5}{GG-}T}\\
$X^2$:&      \texttt{A-C\textcolor[gray]{0.5}{---}T}\\
$X^3$:&      \texttt{AC-\textcolor[gray]{0.5}{---}T}\\
$X^4$:&      \texttt{GCC\textcolor[gray]{0.5}{--A}T}\\
$X^5$:&      \texttt{-CC\textcolor[gray]{0.5}{---}T}\\
$X^6$:&      \texttt{ACC\textcolor[gray]{0.5}{---}T}
\end{tabular}}
 \end{minipage}
\end{center}
\caption{\emph{A 6-star phylogenetic tree and example of multiple alignment. ${\cal R}$ stands for the ancestral sequence  and $X^i$ stand for observed sequences. Letters in dark represent descendants of a position in the ancestral sequence whereas letters in gray represent insertions.}}\label{star}
\end{figure}
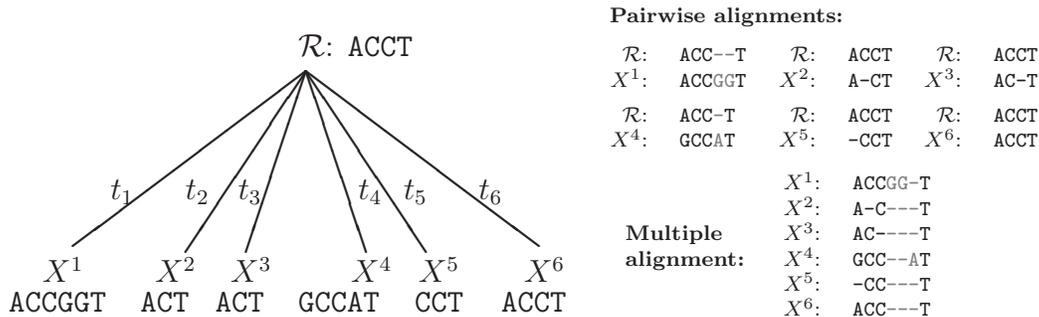
We know that under the TKF91 indel model the pairwise alignment is a Markov chain on the state space $\{\stackrel{\texttt{B}}
{\texttt{{\scriptsize B}}},\stackrel{\_}{\stackrel{\tiny
\phantom{.}}{\texttt{{\scriptsize B}}}},
\stackrel{\texttt{{\scriptsize B}}}{\stackrel{\texttt{{\tiny
\phantom{.}}}}{\_}} \}$ (see Metzler {\em et al.}, 2001).
Let us precise that from now on the word alignment will denote indistinctly the whole alignment, that is the reconstruction of the whole evolution process, including substitutions, of a set of sequences, or, as in this case, the \emph{bare} alignment, that is, the reconstruction of the indel process only. We recall that when we model the alignment of a set of sequences as a hidden variable model, the \emph{bare} alignment is which corresponds to the hidden process.

In contrast to the pairwise alignment case, when we apply the TKF91 indel evolution model to multiple
alignment we do not get a Markov chain on the set of all possible
multiple alignment columns. In fact, Markov models for multiple
alignment exist but states do not exactly correspond to alignment
columns. Indeed, insertion states in these models describe not
only an insertion on one sequence but also a kind of ``memory" of
what is happening in other sequences (see Holmes and Bruno, 2001, and Hein {\em et al.}, 2003, for
instance). This is because the Markov dependence for pairwise
root-leaf alignments applies independently on each sequence due to
the branch independence of the evolution process. So that, an
alignment column describing an insertion on sequence $i$ depends
on the last column of the alignment describing any evolutionary
event on sequence $i$, but there may be several
alignment columns describing insertions on other sequences between these two columns. See Figure~\ref{mapa} for an illustration.\\
\begin{figure}
\vspace{-0.35cm}
\centering
\includegraphics[width=9cm,height=7.4cm]{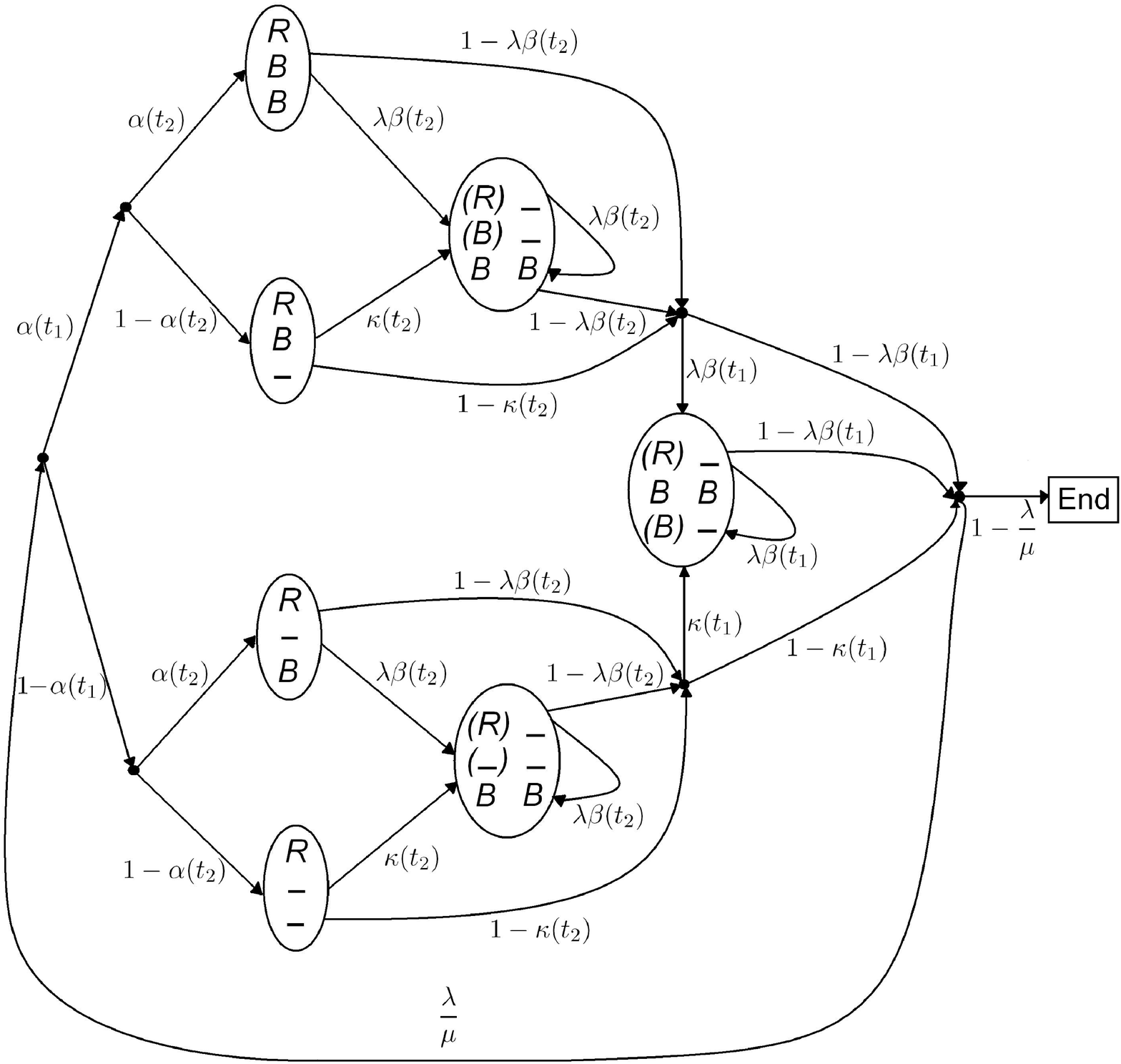}
 \caption{\emph{Alignment Markov chain for a star tree with two leaves and sequences evolving under the TKF91 evolution model as described by \citet{HolmesBruno}. The bubbles represent the states of the chain. \textsf{R} stands for a base on the ancestral sequence and \textsf{B} for a base on any of the observed sequences. Letters in brackets appear on insertion states and stand for the last event recorded on each sequence. The left column in insertion states is the true state whereas the right one is its representation in the alignment. There are two different states to represent an insertion on the second sequence since they have also to represent the fate (conservation or deletion) of the root nucleotide in the first sequence. That is because transitions from an insertion on the second sequence to an insertion on the first sequence depend on the fate of the last root nucleotide on the first sequence. Insertions on the second sequence are written in the alignment before insertions on the first sequence. The probabilities on the edges are those of (\ref{TKFprocess}) and $\alpha(t)=e^{-\mu t}$.}}
\label{mapa}
\end{figure}
So one could say that insertions to the root sequence break the Markov
dependence between alignment columns. Also, the order of the
insertions between two homologous positions is irrelevant, the
only important fact being which positions are homologous to which (see for instance the multiple alignment in Figure \ref{star} where the insertion columns are completely exchangeable). Then, the interesting objet is not the alignment but the homology
structure, essentially an alignment of homologous positions with
specification of the number of insertions on each sequence between
any two homologous positions.
The homology structure can be described in terms of the
nucleotides at the root sequence. Indeed the homology structure is
just the sequence of root positions in which we specify, for each
ancestral residue, its fate (whether it has survived or been deleted) and all
the insertions occurred to its right in each one of the observed sequences (see Figure \ref{hom_str} for an example). The homology structure is, as the \emph{bare} alignment, a reconstruction of the indel process of a set of sequences.\\
\begin{figure}
\begin{center}
{\scriptsize \textbf{\emph{Bare} alignment} \hspace{1.2cm} \textbf{Homology structure}\vspace{0.4cm}\\
\begin{tabular}{rl}
$X^1$:&      \texttt{BBB\textcolor[gray]{0.5}{BB-}B}\\
$X^2$:&      \texttt{B-B\textcolor[gray]{0.5}{---}B}\\
$X^3$:&      \texttt{BB-\textcolor[gray]{0.5}{---}B}\\
$X^4$:&      \texttt{BBB\textcolor[gray]{0.5}{--B}B}\\
$X^5$:&      \texttt{-BB\textcolor[gray]{0.5}{---}B}\\
$X^6$:&      \texttt{BBB\textcolor[gray]{0.5}{---}B}
\end{tabular}}
\hspace{0.8cm}
{\scriptsize\begin{tabular}{rl|l|l|l|}
$X^1$:&      1\,0&1\,0&1\,2&1\,0\\
$X^2$:&      1\,0&0\,0&1\,0&1\,0\\
$X^3$:&      1\,0&1\,0&0\,0&1\,0\\
$X^4$:&      1\,0&1\,0&1\,1&1\,0\\
$X^5$:&      0\,0&1\,0&1\,0&1\,0\\
$X^6$:&      1\,0&1\,0&1\,0&1\,0
\end{tabular}}
\end{center}
\caption{\emph{The bare alignment and the associated homology structure for the multiple alignment of Figure \ref{star}. There are two columns for each position in the ancestral sequence: in the first one, $1$ stands for a nucleotide that has been conserved and a $0$ for a nucleotide that has been deleted; the second column represents the number of insertions in each sequence to the right of the ancestral nucleotide. In the homology structure there is no artificial order between insertions in different sequences.}}\label{hom_str}
\end{figure}
In the TKF91 indel model, evolution on each \emph{link} is independent of evolution on other \emph{links} (see Thorne {\em et al.}, 1991). That is why the homology structure under these models can be described as a sequence of i.i.d. random variables as we will see in the next section.\\

\subsection{The homology structure on a star tree}
Consider a $k$-star phylogenetic tree $\T$ with branches lengths
$t_1,\dots,t_k$. The homology structure of the sequences related
by $\T$ is a sequence of independent and identically distributed
random variables $\{\e_n\}_{n\geq 1}$. The variable $\e_n$ represents the fate of the $n$-th ancestral sequence character (or fragment, if we consider fragment indel evolution models). Its distribution will depend on the chosen indel evolution model. Under the TKF91 indel evolution model $\{\e_n\}_{n\geq 1}$ is a sequence of i.i.d. random variables on
$$\E^k=\left\{ (e(1),e(2))=(\delta^{1:k},a^{1:k}) \,|\,\delta^i\in\{0,1\} ,\, a^i\geq 0,\, i=1,\dots,k \right\}.$$
The first column of $\e_n$ corresponds to the
homologous positions to the $n$-th ancestral character. If it is conserved in
sequence $i$, $i=1,\dots,k$, then $\e^i_n(1)=1$, else
$\e^i_n(1)=0$. It is possible for an ancestral character to have
been deleted in all the observed sequences ($\e_n(1)=0_k$, where $0_k$ stands for the $k$-dimensional vector with all
components equal to 0).
The second column of $\e_n$ represents the number of insertions on
the observed sequences between the $n$-th and the $(n+1)$-th
ancestral sequence characters. It is possible to have none
insertions in any of the observed sequences between two homologous
positions ($\e_n(2)=0_k$). See Figure~\ref{hom_str} for an example of an homology structure.

Due to the branch independence, the law of $\e_n$, $n\geq 1$, under the TKF91 indel model, is given by
\begin{equation}\label{loi_e}
\mathbb{P}_{\lambda}\left(\e_n\!=\!(\delta^{1:k},a^{1:k})\right)\!=\!\prod_{i=1}^k
\!\left(q^H_{a^i+1}(t_i)\right)^{\1{\delta^i=1}}\!
\left(q^N_{a^i}(t_i)\right)^{\1{\delta^i=0}}\!\!,\,\,\,\,\, (\delta^{1:k}\!,a^{1:k})\!\in \!\E^k\!.
\end{equation}
Conditionally to the result of the indel process (the \emph{bare} alignment), nucleotides on the observed sequences are emitted according to some substitution process. In practice, most nucleotide substitution processes are described by a continuous time Markov chain defined on $\A$ and depending on the branches lengths (see Felsenstein, 2004, for instance). Let us note $\nu$ the stationary law of this process and $p_t(\cdot,\cdot)$ the transition probability matrix for a transition time $t>0$. Then, for $n\geq1$, if $\e_n=(\delta^{1:k},a^{1:k})$, $r=\sum_{i=1}^k \delta^i$ nucleotides are emitted in the conserved positions according to the joint probability distribution $h_J$, $J=\{i|\delta^i=1\}$, on ${\cal A}^r$, with
\begin{equation}\label{subs_Markov}
h_{\{i_1,\dots,i_r\}}(x^{i_1},\dots,x^{i_r})=\sum_{R\in \A} \nu(R)\prod_{j=1}^{r}p_{t_{i_j}}
(R,x^{i_j}),
\end{equation}
where $R$ represents the unknown ancestral nucleotide. Note that $h_J$ does not only depend on the cardinal of $J$, but also on its elements via the branches lengths $\{t_i\}_{i=1,\dots,k}$. In the inserted positions, $\sum_{i=1}^k a^i$ nucleotides are emitted independently and identically distributed according to the probability distribution $f(\cdot) =\nu(\cdot).$

In classical substitution processes there is independence between the different sites of the ancestral sequence. That means that conditionally on $\{\e_n\}_{n\geq 1}$, the emissions of nucleotides on the observed sequences at different instants (positions of the ancestral sequence) are independent and equally distributed as described below.

\section{The multiple-hidden i.i.d. model on a star tree}

We present in this section the \emph{multiple-hidden i.i.d.} model, where \emph{multiple} refers to the number ($>2$) of observed sequences and \emph{i.i.d.} to the nature of the hidden process, by analogy to the name of the pair-hidden Markov model. The homology structure of $k$ sequences evolving under the TKF91 indel evolution model and a particular substitution model, as described in the precedent section, is a particular parametrization of this model.

Consider a sequence of i.i.d. random variables $\{\e_n\}_{n\geq 1}$ on the state space
$$\E^k=\left\{ (e(1),e(2))=(\delta^{1:k},a^{1:k}) \,|\,\delta^i\in\{0,1\} ,\, a^i\in\N\,\, i=1,\dots,k \right\}$$
with distribution $\pi$. 

The process $\{\e_n\}_{n\geq 1}$ generates a random walk
$\{Z_n\}_{n \geq 0}$ with values on $\N^k$ by letting $Z_0
=0_k$ and $Z_n =\sum_{1\leq j\leq n}[\e_j(1)+\e_j(2)]$ for
$n\geq 1$. The coordinate random variables corresponding to $Z_n$
at position $n$ are denoted by $(Z^1_n,\dots,Z^k_n)$ ({\it i.e.}
$Z_n=(Z^1_n,\dots,Z^k_n)$). In the homology structure context they represent the length of each observed sequence up to position $n$ on the ancestral sequence.

Let us now describe the emission of the observed sequences which take values on a finite alphabet ${\cal A}$. We distinguish to kinds of emissions, joint emissions across $k$ or a smaller number of sequences (corresponding to $\e_n(1)$) and single emissions (corresponding to $\e_n(2)$).
For $n\geq1$, if $\e_n=(\delta^{1:k},a^{1:k})$ then a vector of
$r=\sum_{i=1}^k \delta^i$ r.v. is emitted according to some probability
distribution $h_J$, $J=\{i|\delta^i=1\}$, on ${\cal A}^r$ and
$\sum_{i=1}^k a^i$ r.v. $\{ X^i_{1:a^i}, a^i \geq 1 \}$,
$i=1,\dots,k$, are emitted according to the following
scheme: $\{X^i_j\}^{i=1,k}_{j=1,a^i}$ are independent and
identically distributed from some probability distribution $f$ on
${\cal A}$.

Conditionally to the process $\{\e_n\}_{n\geq 1}$, the
random variables emitted at different instants are independent. The whole multiple-hidden i.i.d. model is described by the parameter $\theta=(\pi,\,
\{h_J\}_{J\subseteq K},\,f)\in \Theta$, where $K=\{1,\dots,k\}$.
We do not consider the branches lengths as a
component of the parameter and assume they are known.

The conditional distribution of the observations given an homology structure $e_{1:n}=(e_j)_
{1\leq j\leq n}=((\delta^{1:k}_j,a^{1:k}_j))_{1\leq j\leq n}$, writes
\begin{eqnarray} \label{conditional2}
&&\pr(\X_{1_k:Z_n}| \e_{1:n}=e_{1:n}, \{\e_m\}_{m>n},
\{X^i_{n_i}\}_{i\in K, n_i>Z^{i}_n})=
\pr(\X_{1_k:Z_n} |  \e_{1:n}=e_{1:n} ) \nonumber\\
&=& \prod_{j=1}^n \pr(\X_{Z_{j-1}+1_k:Z_j} |  \e_j=e_j ) \nonumber \vspace{-0.2cm}\\
&=&\prod_{j=1}^n \Big\{ h_{\{i|\delta_j^i=1\}}
\big(\{X^i_{Z^i_{j-1}+1}\}_{i|\delta^i_j=1}\big) \prod_{i=1}^k
\prod_{s=1}^{a^i_j}
 f\big(X^i_{Z^i_{j-1} +\delta^i_j +s}\big) \Big\}
\end{eqnarray}
where $1_k$ stands for the $k$-dimensional vector with all
components equal to 1 and $\X_{1_k:Z_n}= (X^1_{1:Z^{1}_n},
\dots, X^k_{1:Z^{k}_n})$. This notation can be confusing since it
is possible to have $Z^{i}_{j-1}+1_k > Z^{i}_{j}$ for some $i\in K$ and for some
$j\geq 1$. However when writing $\X_{Z_{j-1}+1_k: Z_{j}}$ we will only be
considering the variables corresponding to those sequences $i\in K$ for which
$Z^{i}_{j-1}+1_k \leq Z^{i}_{j}$.

The complete distribution $\pr$ is given by
\begin{eqnarray*}
&&\pr(\e_{1:n}=e_{1:n},\X_{1_k:Z_n})=  \pr(\X_{1_k:Z_n} | \e_{1:n}=e_{1:n} )  \pr(\e_{1:n}=e_{1:n} ) \\
&=& \pr(\X_{1_k:Z_n} | \e_{1:n}=e_{1:n} ) \prod_{j=1}^n
\pr(\e_j=e_j)= \pr(\X_{1_k:Z_n} | \e_{1:n}=e_{1:n} ) \prod_{j=1}^n
\pi(e_j)
\end{eqnarray*}
We denote by $\pr$ (and $\esp$) the induced probability distribution (and
corresponding expectation) on $\E^{\N}  \times (\A^{\N})^k $ and
$\theta_0=(\pi_0,\,
\{h_{0_J}\}_{J\subseteq K},\,f_0)$ the true parameter corresponding  to the distribution
of the observations (we shall abbreviate to $\pro$ and $\espo$ the
probability distribution and expectation under parameter
$\theta_0$).
\subsection{Observations and likelihoods}

As in the pair-HMM (see Arribas-Gil {\em et al.}, 2006) there are different
interpretations of what the observations represent on this model,
and thus different definitions for the log-likelihood of the
observed sequences $(X^1_{1:n_1},\dots,X^k_{1:n_k})$. However, the
difference with the pair-HMM is that in the multiple-hidden i.i.d. model we suppose that
the observed sequences are cut out of very much longer sequences
between known homologous positions. This implies that any
interpretation of what observations represent must assume that the
underlying process $\{\e_n\}_{n \geq 1}$ passes through the points $0_k$ and
$(n_1,\dots,n_k)$.

One may consider that what we observe are sequences that have
evolved from an ancestral sequence of length $n$ so that the
likelihood should be $\pr(\X_{1_k:Z_n})$ $=\pr(\X_{1_k:Z_n},Z_n)$. This
term is computed by summing, over all possible homology structures
from an ancestral sequence of length $n$, the probability of
observing the sequences and a homology
structure. 

Let us define $\E_{n_1,\dots,n_k}$ the set of all possible
homology structures of $k$ sequences of lengths $n_1,\dots,n_k$:
\begin{equation}
\E_{n_1,\dots,n_k}=\{ e\in(\E^k)^n ;\,\, n\in \N, \,\,
\sum_{j=1}^{n}|e_j|=(n_1,\dots,n_k) \}.
\end{equation}
For any homology structure $e\in \E_{n_1,\dots,n_k}$, if $e \in
(\mathcal{E}^k)^n$, then $n$ is the length of the path $e$ and is
denoted by $|e|$. In the homology structure context, $|e|$ stands for the length of the ancestral sequence.
So we have
$$\pr(\X_{1_k:Z_n})=\pr(\X_{1_k:Z_n},Z_n)=\sum_{e  \in \E_{Z_n};|e|=n} \pr
(\e_{1:n}=e,\X_{1_k:Z_n}).$$ Then, we would define the
log-likelihood $\ell_n(\theta)$ as
\begin{equation}
\label{lt} \ell_n  (\theta)  =  \log  \pr  (\X_{1_k:Z_n}) ,  \quad n
\geq  1 .
\end{equation}
But since  the underlying process $\{Z_n\}_{n \geq  0}$ is not
observed, the   quantity  $\ell_n(\theta)$  is   not  a measurable
function   of  the observations. More precisely, the length $n$ at
which the observation is made is not observed itself. Though, if
one decides that $(X^1_{1:n_1},\dots,X^k_{1:n_k})$ corresponds to
the observation of  the  emitted sequences  at  a  point  of the
hidden  process $Z_n=(Z^i_n)_{i=1,\dots,k}$ and some {\it unknown}
``ancestral length" $n$, one does not use $\ell_n(\theta)$ as a
log-likelihood, but rather
\begin{equation}
\label{qt} w_n (\theta) =  \log \Q (\X_{1_k:Z_n}) , \quad n \geq 1
\end{equation}
where for any integers $n_i, i=1,\dots,k$
\begin{equation}
\Q(X^1_{1:n_1},\dots,X^k_{1:n_k}) =\pr ( \exists m \geq 1, Z_m
=(n_1,\dots,n_k) ; X^1_{1:n_1},\dots,X^k_{1:n_k} ).
\end{equation}
In other words, $\Q$ is the probability of the observed sequences
under the assumption that the underlying process $\{\e_n\}_{n \geq
1}$ passes through the  point $(n_1,\dots,n_k)$. But the length of
the ancestral sequence remains unknown when computing $\Q$. This
gives the formula:
\begin{equation}  \label{Qnm2}
\Q(X^1_{1:n_1},\dots,X^k_{1:n_k})  =\sum_{e  \in
\E_{n_1,\dots,n_k}} \pr
(\e_{1:|e|}=e,X^1_{1:n_1},\dots,X^k_{1:n_k}).
\end{equation}
Let us stress that we have
$$
w_n(\theta) =\log \pr(\exists m\geq  1, Z_m
=(Z^{i}_n)_{i=1,\dots,k}; X^1_{1:Z^{1}_n}, \dots,
X^k_{1:Z^{k}_n}), \quad n\geq 1,
$$
meaning that  the length of the ancestral sequence is not
necessarily $n$, but is in
fact unknown.

In the homology structure context, $\Q$  is  the  quantity  that  is  computed  by the multiple
alignment algorithms (see for instance Holmes and Bruno, 2001, Steel and Hein, 2001, or
Lunter {\em et al.}, 2003) and which is used as likelihood in
biological applications. The more extended application is to use
this quantity to co-estimate alignments and phylogenetic trees
in a Bayesian framework via MCMC calculations (cf. Fleissner {\em et al.}, 2005; Lunter {\em et al.}, 2005; Novák {\em et al.}, 2008). Indeed, algorithms that perform this joint estimation compute, at each iteration, the likelihood of sequences for a given phylogenetic tree. Thus,
asymptotic properties of the criterion $\Q$ and consequences on
asymptotic properties of the estimators derived from $\Q$
are of primarily interest.

We will look for asymptotic results for $n\to \infty$. We need to
establish some kind of relationship between $n$ and
$n_1,\dots,n_k$, to derive asymptotic results for $n_i \to
\infty$. From our definition of the multiple-hidden i.i.d. model, it is clear
that it does not exist a deterministic relationship between the
length of the hidden sequence and the lengths of the observed
sequences. However, in the multiple alignment problem, a natural assumption is that very big
insertions and deletions occur rarely and thus the length of the
root sequence should be equivalent to the lengths of the observed sequences. In fact we have the following result.\\
\begin{lemma}\label{esp=1}
In the multiple-hidden i.i.d. model on a star tree under the TKF91 indel evolution process, that is, when $\pi$ is the distribution given by (\ref{loi_e}), for any $\lambda >0$ we have $Z_n^i \sim n$, $i=1,\dots,k$, $\mathbb{P}_{\lambda}$-almost surely.
\end{lemma}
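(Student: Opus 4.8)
The plan is to recognize $Z_n^i$ as a partial sum of i.i.d. increments and to apply the strong law of large numbers (SLLN). By definition $Z_n^i=\sum_{j=1}^n[\e_j^i(1)+\e_j^i(2)]$, and since $\{\e_n\}_{n\geq1}$ is i.i.d., so is the real-valued sequence $\{N_j^i\}_{j\geq1}$ given by $N_j^i:=\e_j^i(1)+\e_j^i(2)$. Thus $Z_n^i$ is an i.i.d. sum, and provided $\mathbb{E}_\lambda[N_1^i]<\infty$ the SLLN yields $Z_n^i/n\to\mathbb{E}_\lambda[N_1^i]$ $\mathbb{P}_\lambda$-almost surely. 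The assertion $Z_n^i\sim n$ is precisely the statement $Z_n^i/n\to1$, so the whole proof reduces to showing the single identity $\mathbb{E}_\lambda[N_1^i]=1$.

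To compute this mean I would use that the joint law (\ref{loi_e}) factorizes across the $k$ coordinates, so that the marginal law of $\e_n^i=(\delta^i,a^i)$ is $q^H_{a^i+1}(t_i)$ on $\{\delta^i=1\}$ and $q^N_{a^i}(t_i)$ on $\{\delta^i=0\}$, with $q^H,q^N$ as in (\ref{newprocess}). On $\{\delta^i=1\}$ one has $N^i=1+a^i$, the number of descendants of a surviving link, while on $\{\delta^i=0\}$ one has $N^i=a^i$, the number of descendants of a link that has died; in both cases $N^i$ equals the total number of descendants of the ancestral link in sequence $i$. Hence
\begin{equation*}
\mathbb{E}_\lambda[N_1^i]=\sum_{n\geq1}n\,q^H_n(t_i)+\sum_{n\geq1}n\,q^N_n(t_i).
\end{equation*}
Setting $p:=\lambda t_i/(1+\lambda t_i)$, so that $1-p=1/(1+\lambda t_i)$, both $q^H_n$ and $q^N_n$ have the form $(\text{const})\times(1-p)p^{n-1}$, and the sums are evaluated from $\sum_{n\geq1}np^{n-1}=(1-p)^{-2}$. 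A short computation gives $\sum_{n\geq1}n\,q^H_n(t_i)=e^{-\lambda t_i}/(1-p)$ and $\sum_{n\geq1}n\,q^N_n(t_i)=\big((1-p)-e^{-\lambda t_i}\big)/(1-p)$, whose sum collapses to $(1-p)/(1-p)=1$.

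There is no real analytic obstacle: finiteness of the mean is automatic because $p<1$ makes all the series geometrically convergent, so the SLLN applies verbatim. The only point requiring care is the bookkeeping that identifies $N^i$ with the total progeny count in each of the two cases $\delta^i\in\{0,1\}$, so that the expectation is set up correctly. As a sanity check, the value $\mathbb{E}_\lambda[N_1^i]=1$ is exactly the expected total progeny at time $t_i$ of a single link in the critical ($\lambda=\mu$) linear birth-death process underlying the TKF91 model, which equals $e^{(\lambda-\mu)t_i}=1$; this probabilistic reading also makes transparent why the conclusion holds for every $\lambda>0$ and every branch length $t_i$.
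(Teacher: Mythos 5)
Your proposal is correct and takes essentially the same route as the paper's proof: identify $Z_n^i$ as a sum of i.i.d. increments $\e_j^i(1)+\e_j^i(2)$, verify that the two cases $\delta^i=1$ and $\delta^i=0$ contribute $\sum_{n\geq 1} n\,q^H_n(t_i)$ and $\sum_{n\geq 1} n\,q^N_n(t_i)$ respectively, evaluate the geometric series from (\ref{newprocess}) to get mean $1$, and conclude by the strong law of large numbers. Your explicit substitution $p=\lambda t_i/(1+\lambda t_i)$ and the sanity check via the critical birth-death process ($e^{(\lambda-\mu)t_i}=1$) are pleasant but inessential embellishments of the same argument.
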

\begin{proof}.
For all $i=1,\dots,k$ and for all $n\geq 1$ we have that
$$Z_n^i= \sum_{j=1}^n (\e^i_j(1) + \e^i_j(2))$$
where $\{\e^i_j\}_{j\geq 1}$ are i.i.d. Moreover, from (\ref{newprocess}) we have, for any $\lambda >0$
\begin{multline*}
\mathbb{E}_{\lambda}\,[\e^i_j(1) + \e^i_j(2)]\\
=\displaystyle\sum_{m\geq 1} m \!\left\{\mathbb{P}_{\lambda}(\e^i_j(1) + \e^i_j(2)=m,
\e^i_j(1)=0)\!+\!\mathbb{P}_{\lambda}(\e^i_j(1) + \e^i_j(2)=m,
\e^i_j(1)=1)\right\}\\
=\displaystyle\sum_{m\geq 1} m \left\{ q_m^N(t_i) + q_m^H(t_i)\right\}\hfill \phantom{w}\\
=\displaystyle\sum_{m\geq 1} m \left\{\!\!\left(\!\frac{1}{1+\lambda
t_i}-e^{-\lambda t_i}\!\right)\frac{1}{1+\lambda
t_i}\left(\!\frac{\lambda t_i}{1+\lambda t_i}\!\right)^{m-1} \!\!\!+\!
e^{-\lambda t_i}\frac{1}{1+\lambda t_i}\left(\!\frac{\lambda
t_i}{1+\lambda t_i}\!\right)^{m-1}\!\right\}\\=1.\hfill \phantom{w}
\end{multline*}
Now the result holds from the strong law of large numbers.
\end{proof}\\

According to this lemma, under the TKF91 indel evolution model, asymptotic results for $n \to \infty$
will imply equivalent ones for $n_i \to \infty,\, i=1,\dots,k$. Let us establish an assumption to get the same result for the general multiple-hidden i.i.d. model.
\begin{assumption} In the multiple-hidden i.i.d. model on a star tree $\esp\,[\e_n(1)+\e_n(2)]=1_k$, for $n\geq 1$, for any $\theta\in\Theta$.
\end{assumption}

\subsection{The case of two sequences}
Let us consider the case in which $k=2$. It is clear that the general multiple-hidden i.i.d. model and the pair-HMM are different in this case. However, in the context of the alignment of two sequences evolving under the TKF91 model, the two models are equivalent. In fact, in the pairwise alignment we consider that one of the sequences is the ancestor of the other one, but since the TKF91 model is time reversible, this is equivalent to consider that both sequences evolve from a common unknown ancestor.

First of all, let us remark that the likelihood ($\Q$) of two sequences
$x_{1:n}$ and $y_{1:m}$ is the same under the two models. Let $t$ be the evolution time between both sequences, that is, the sum of the evolution times between the root and each one of the sequences, $t_1+t_2$, in the multiple alignment setup. Consider for the pair-HMM the following transition matrix:\vspace{-0.1cm}
\begin{multline}\label{mult_2seq}
\begin{array}{ccccc}
\qquad \qquad \quad  D & &\qquad\qquad \quad \quad \quad H & &\qquad \quad \quad\quad V
\end{array}\\
\begin{array}{c}
D\vspace{0.425cm}\\
H\vspace{0.425cm}\\
V \end{array}
\left( \begin{array}{ccccc}
\displaystyle{\frac{\alpha(t)}{(1+\lambda t)}} & &
\displaystyle{\frac{1-\alpha(t)}{(1+\lambda t)}}& &
\displaystyle{\frac{\lambda t}{(1+\lambda t)}}\vspace{0.175cm}\\
(1-\kappa(t))\alpha(t)& &(1-\kappa(t))(1-\alpha(t))
& & \kappa(t)\vspace{0.175cm}\\
\displaystyle{\frac{\alpha(t)}{(1+\lambda t)}} & &
\displaystyle{\frac{1-\alpha(t)}{(1+\lambda t)}}& &
\displaystyle{\frac{\lambda t}{(1+\lambda t)}}
\end{array}\right)\qquad\quad \phantom{Q}
\end{multline}\\
where $D$, $H$ and $V$ stand for diagonal, horizontal and vertical movements respectively, with the notations of \citet{Argamat}, and
$\alpha(t)=e^{-\lambda t}$, $\kappa(t)=1-\frac{\lambda t}{(1+\lambda t)(1-\alpha(t))}$. It is easy to show that the probability of an homology structure (under the multiple-hidden i.i.d. model) is just the sum of the probabilities of all possible alignments (under the pair-HMM) leading to that homology structure. Then, the sum over all possible alignments and all possible homology structures of two sequences is equivalent.

Finally, note that for the transition matrix in (\ref{mult_2seq})
the stationary probabilities of insertions
and deletions are the same, that is $p=q$ with the notations of \citet{Argamat}. That means that we are in the
case where the \emph{main direction} of the alignment, that is, its expectation under the pair-HMM, is always the straight line from $(0,0)$ to $(n,n)$ for every value of the parameter. This is also the case in the multiple-hidden i.i.d. model as we have shown in Lemma \ref{esp=1}.

\section{Information divergence rates in the star tree model}\label{Inf_Div}
\subsection{Definition of Information divergence rates}
In this section we prove the convergence of the normalized \emph{log-likelihoods}
$\ell_n(\theta)$ and $\omega_n(\theta)$.
Let us note
\begin{multline*}
\Theta_{0} =\left\{\theta \in \Theta \,\, | \,\, \pi(e)>0, \, \,
h_J(x^{1:|J|})>0, \,\, f(y)>0, \right.\\
\left.\,\forall e \in \E^{k}, \,\,\forall x^{1:|J|} \in{\cal A}^{|J|},\,\, \forall J\subseteq K,\,\,\forall y\in{\cal
A}\right\}.
\end{multline*}
We shall always assume that $\theta_{0}\in \Theta_{0}$.

\begin{thm}
\label{thdivergence2} The following holds for any
$\theta\in\Theta_0$:
\begin{itemize}
\item[i)] $ n^{-1} \ell_{n} (\theta)$ converges $\pro$-almost
surely and in $\mathbb{L}_1$, as $n$ tends to infinity to
$$
\ell (\theta ) = \lim_{n\rightarrow \infty}\frac{1}{n}\espo
\left(\log \pr(\X_{1_k:Z_n}) \right) = \sup_{n}\frac{1}{n}\espo
\left(\log \pr(\X_{1_k:Z_n}) \right).
$$
\item[ii)] $ n^{-1} w_{n} (\theta)$ converges $\pro$-almost surely
and in $\mathbb{L}_1$, as $n$ tends to infinity to
$$
w (\theta ) = \lim_{n\rightarrow \infty}\frac{1}{n}\espo
\left(\log \Q(\X_{1_k:Z_n}) \right) = \sup_{n}\frac{1}{n}\espo
\left(\log \Q(\X_{1_k:Z_n}) \right).
$$
\end{itemize}
\end{thm}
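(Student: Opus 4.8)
The plan is to read both $\ell_n(\theta)$ and $w_n(\theta)$ as normalized logarithms of probabilities indexed by the number $n$ of hidden i.i.d. steps, and to prove their convergence by a superadditivity argument: Fekete's lemma will handle the deterministic normalized expectations and identify the limit as a supremum, while the superadditive form of Kingman's ergodic theorem (Liggett's version) will upgrade this to $\pro$-almost sure and $\mathbb{L}_1$ convergence. I treat part (i) in detail; part (ii) is identical after replacing $\pr(\X_{1_k:Z_n})$ by $\Q(\X_{1_k:Z_n})$.

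First I would set up a stationary superadditive family. For $0\le m<n$ let $u_{m,n}(\theta)$ denote the logarithm of the likelihood of the observations emitted strictly between the $m$-th and $n$-th ancestral positions, computed as if the hidden chain were restarted at step $m$, so that $u_{0,n}(\theta)=\ell_n(\theta)$. The key inequality is superadditivity, $u_{0,n+m}\ge u_{0,n}+u'$ with $u'$ distributed as $u_{0,m}$, obtained by restricting the sum in $\pr(\X_{1_k:Z_n})=\sum_{e;|e|=n}\pr(\e_{1:n}=e,\X_{1_k:Z_n})$ to those homology structures that, after their first $n$ steps, sit exactly at the true position $Z_n$ reached under $\pro$. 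Because $\{\e_j\}_{j\ge1}$ is i.i.d. and the emissions at distinct instants are conditionally independent (equation~(\ref{conditional2})), this restricted sum factorizes into the likelihood of $\X_{1_k:Z_n}$ times that of the remaining block $\X_{Z_n+1_k:Z_{n+m}}$, giving the inequality. Taking $\espo$ yields $a_{n+m}\ge a_n+a_m$ for $a_n:=\espo[\ell_n(\theta)]$, so Fekete's lemma provides $\lim_n a_n/n=\sup_n a_n/n=:\ell(\theta)$, which is the stated characterization.

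Next I would check the hypotheses of the ergodic theorem for $\{-u_{m,n}\}$. Stationarity and ergodicity are free from the i.i.d. structure: the shift $\sigma$ that advances the pair (hidden variable, emitted block) by one ancestral position preserves $\pro$ and is ergodic, and $u_{m+1,n+1}=u_{m,n}\circ\sigma$. Since the alphabet $\A$ is finite, each $\pr(\X_{1_k:Z_n})$ is a genuine probability, so $u_{0,n}\le0$; consequently the sup-of-expectations hypothesis is automatic and the only substantive integrability requirement is $\espo[(u_{0,1})^-]<\infty$, that is $\espo[\ell_1(\theta)]>-\infty$. Granting this, the theorem yields at once both the $\pro$-almost sure and the $\mathbb{L}_1$ convergence of $n^{-1}\ell_n(\theta)$ to the a.s.\ constant $\ell(\theta)$, and ergodicity forces this constant to equal $\sup_n a_n/n$.

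The main obstacle is exactly the one-step lower bound $\espo[\ell_1(\theta)]>-\infty$. I would bound $\pr(\X_{1_k:Z_1})$ from below by the single term coming from the true homology structure produced under $\pro$, which splits $\espo[\ell_1(\theta)]$ into emission contributions and a hidden contribution. The emission terms are harmless, since on the finite alphabet $\A$ the densities $h_J$ and $f$ are bounded away from $0$ (because $\theta\in\Theta_0$) and the expected number of emitted letters per ancestral position is finite. The delicate term is the hidden contribution $\espo[\log\pi(\e_1)]=\sum_{e\in\E^k}\pi_0(e)\log\pi(e)$, a cross-entropy that must be shown to be $>-\infty$ even though $\E^k$ is infinite (the insertion counts $a^i$ being unbounded); this requires comparing the tails of $\pi_0$ and $\pi$, and under the TKF91 parametrization~(\ref{loi_e}) both decay geometrically in the counts, so the series converges. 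Once this bound is secured, the identical scheme applied to the decomposition $\Q(\X_{1_k:Z_{n+m}})\ge\Q(\X_{1_k:Z_n})\,\Q(\X_{Z_n+1_k:Z_{n+m}})$ — passage through the intermediate point $Z_n$ again being the restriction used — proves part (ii).
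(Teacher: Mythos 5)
Your proof is correct and follows the same skeleton as the paper's: both apply the superadditive form of Kingman's ergodic theorem to the array $W_{m,n}=\log \pr(\X_{Z_m+1_k:Z_n})$, and your key inequality is obtained by exactly the paper's device, namely restricting the sum over homology structures to those passing through the true intermediate position of the walk and factorizing via the i.i.d.\ structure of $\{\e_n\}$ and the conditional independence (\ref{conditional2}); likewise both of you get $\mathbb{L}_1$ convergence from $W_{0,n}\leq 0$. Where you genuinely diverge is in the two subsidiary verifications, and in both cases instructively. First, for stationarity and ergodicity the paper follows Leroux's Lemma~1: approximation of a $T$-invariant set by cylinder sets, conditional strong mixing of $(W_{m+l,n+l})_{l>0}$ given $(\e_n)_n$, Sucheston's $0$--$1$ law, and finally ergodicity of the i.i.d.\ hidden process. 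Your shortcut --- observing that the pairs (hidden variable, emitted block) form an i.i.d.\ sequence under $\pro$, that $u_{m,n}$ is a measurable functional of the pairs with indices $m<j\leq n$, and that $u_{m+l,n+l}=u_{m,n}\circ\sigma^l$ is therefore a factor of a Bernoulli shift --- is valid and considerably simpler; that machinery from Leroux is really needed when the hidden process is Markov rather than i.i.d., and in Liggett's formulation the non-overlapping sequences $(W_{nk,(n+1)k})_n$ are even i.i.d.\ here, so ergodicity is immediate. Second, on the integrability condition $\espo[\log\pr(\X_{1_k:Z_1})]>-\infty$ you are in fact \emph{more} careful than the paper: the paper only argues that $\pr(\X_{1_k:Z_1})>0$ $\pro$-almost surely, which by itself does not bound the expectation of the logarithm, whereas you correctly isolate the cross-entropy term $\sum_{e\in\E^k}\pi_0(e)\log\pi(e)$ over the countably infinite state space as the substantive requirement and verify it under the TKF91 parametrization via the geometric tails of (\ref{loi_e}) (the emission terms being controlled, as you say, by finiteness of $\A$, positivity on $\Theta_0$, and Assumption~1). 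Note only that this verification is parametrization-specific: for the theorem as stated, over all of $\Theta_0$ in the general model, one would need either an analogous tail comparison between $\pi$ and $\pi_0$ or an explicit integrability hypothesis, a point the paper's proof passes over silently; your treatment makes this visible, which is a net gain in rigor rather than a defect of your argument.
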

Using the terminology of \citet{Argamat} we then define Information divergence rates:
\begin{defi} $\forall \theta \in \Theta_0, \;
D (\theta \vert \theta_0)=w (\theta_0)-w (\theta) \quad \text{and} \quad
D^{*}(\theta \vert \theta_0)=\ell (\theta_0)-\ell (\theta).$
\end{defi}
We recall that $D^{*}$ is what is usually called the Information divergence rate in Information Theory:
it is the limit of the normalized Kullback-Leibler divergence between the distributions of
the observations at the true parameter value and another parameter value. However, we also call
$D$ an  Information divergence rate  since $\Q$ may   be interpreted  as a
likelihood.\\
\\
\begin{proof} { \bf of Theorem \ref{thdivergence2}.}
This proof is similar to the proof of Theorem 1 in \citet{Argamat}. We
shall use the following version of the sub-additive ergodic
Theorem due to \citet{Kingman} to prove point {\it i)}.
A similar proof may be written for {\it ii)} and is left to the reader.\\
Let $(W_{s,t})_{0\leq s <t}$
be a sequence of random variables such that
\begin{enumerate}
\item For all $m<n$, $W_{0,n}\geq W_{0,m}+ W_{m,n}$, \item For all
$l>0$, the joint distributions of $(W_{m+l,n+l})_{0\leq m <n}$ are
the same as those of $(W_{m,n})_{0\leq m <n}$, \item
$\espo(W_{0,1})
> -\infty$.
\end{enumerate}

Then $\lim_{n}\! n^{\mbox{\tiny$-1$}}  W_{0,n}$ exists almost
surely. If moreover the sequences $(\!W_{m+l,n+l})_{l>0}$ are
ergodic, then the limit is almost surely deterministic and equals
$\sup_{n}\! n^{\mbox{\tiny$-1$}} \mathbb{E}_0(\!W_{0,n}\!)$. If moreover
$\espo(W_{0,n})\leq An$,  for some constant $A\geq 0$ and all $n$,
then the convergence holds in $\mathbb{L}_1$.

We apply this theorem to the process
$$
W_{m,n}= \log \pr(\X_{Z_m+1_k:Z_n}), \quad 0\leq m<n.
$$
Note that since $Z_0=0_k$ is deterministic, we have
$W_{0,n} = \log \pr(\X_{1_k:Z_n})$. Super-additivity (namely point
1.) follows since for any $0\leq m<n$,
\begin{multline*}
\pr(\X_{1_k:Z_n})= \sum_{\substack{e  \in \E_{Z_n}\\|e|=n}}
\pr (\e_{1:n}=e_{1:n},\,\X_{1_k:Z_n})\\
\phantom{w}\geq \sum_{\substack{e  \in \E_{Z_m}\\|e|=m}} \sum_{\substack{e'
\in \E_{Z_n-Z_m}\\|e'|=n-m}}
\pr (\e_{1:m}=e_{1:m},\, \e_{m+1:n}=e'_{1:n-m}, \,\X_{1_k:Z_n})\hfill \phantom{w}\\
\geq \sum_{\substack{e  \in \E_{Z_m}\\|e|=m}} \sum_{\substack{e'
\in \E_{Z_n-Z_m}\\|e'|=n-m}} \pr (\e_{m+1:n}=e'_{1:n-m},\,
\X_{Z_m+1_k:Z_n})
\times \pr (\e_{1:m}=e_{1:m},\, \X_{1_k:Z_m})\\
=\pr(\X_{1_k:Z_m}) \times \pr(\X_{Z_m+1_k:Z_n})
\end{multline*}
so that we get $ W_{0,n} \geq W_{0,m} +W_{m,n}$, for any $0\leq m<n$.

To understand the distribution of $(W_{m,n})_{0\leq m <n}$, note
that $W_{m,n}$ only depends on trajectories of the random walk
going from the point $(Z^{1}_{m},\dots,Z^{k}_m)$  to the point
$(Z^{1}_{n},\dots,Z^{k}_n)$ with length $n-m$. Since the variables
$(\e_n)_{n\geq 1}$ are i.i.d., one gets that the distribution of
$(W_{m,n})$ is the same as that of
$(W_{m+l,n+l})$ for any $l$, so that point $2.$ holds.

Point $3.$ comes from:
\begin{multline*}
\pr(\X_{1_k:Z_1})=\sum_{\substack{e  \in \E_{Z_1}\\|e|=1}} \pr
(\e_{1}=e)\,\pr(\X_{1_k:Z_1}|\e_{1}=e)\\
=\sum_{\substack{e  \in \E_{Z_1}\\|e|=1}}\pi(e)
\left\{h_{\{i|\delta_1^i=1\}} \big(\{X^i_1\}_{i|\delta^i_1=1}\big)
\prod_{i=1}^k \prod_{s=1}^{a^i_1}
 f\big(X^i_{\delta^i_1 +s}\big) \right\}>0
\end{multline*}
$\pro$-almost surely, since $\theta \in \Theta_0$, provided that $Z^i_1\geq 1$ for some $i\in K$.
So $\espo(W_{0,1} ) = \espo \log \pr(\X_{1_k:Z_1})>-\infty$.

Let us fix $0\leq m <n$. The proof that
$W^{s,t}=(W_{m+l,n+l})_{l>0}$ is ergodic is the same as that of
\citet{Leroux} (Lemma 1). Let $T$ be the shift operator, so that if
$u=(u_l)_{l\geq 0}$, the sequence $Tu$ is defined by
$(Tu)_{l}=(u)_{l+1}$ for any $l\geq 0$. Let $B$ be an event which
is $T$-invariant. We need to prove that $\pro(W^{m,n} \in B)$
equals $0$ or $1$. For any integer $i$, there exists a cylinder
set $B_i$, depending only on the coordinates $u_l$ with $-j_i \leq
l \leq j_i$ for some sub-sequence $j_i$, such that $\pro(W^{m,n}
\in B\Delta B_{j_i})\leq 1/2^i$. Here, $\Delta$ denotes the
symmetric difference between sets. Since $W^{m,n}$ is stationary
and $B$ is $T$-invariant:
\begin{eqnarray*}
\pro\left(W^{m,n}  \in B\Delta  B_{j_i}\right)=
\pro\left(T^{2j_i}W^{m,n} \in
  B\Delta B_{j_i}\right)
=\pro\left(W^{m,n} \in B\Delta T^{-2j_i}B_{j_i}\right).
\end{eqnarray*}
Let $\tilde{B}=\cap_{i\geq 1}\cup_{h\geq i}T^{-2j_h}B_{j_h} $.
Borel-Cantelli's Lemma leads to $\pro(W^{m,n} \in B\Delta
\tilde{B})=0$, so that $\pro(W^{m,n}  \in  B)=\pro(W^{m,n}  \in
\tilde{B})=\pro(W^{m,n}  \in  B  \cap  \tilde{B})$.  Now,
conditional on $(\e_n)_{n\in\N}$, the random variables
$(W_{m+l,n+l})_{l> 0}$ are strongly mixing. Indeed
$W_{m+l,n+l}$ only depends on a finite number of other $(W_{m+k,n+k})$, $k>0$, namely
$(W_{m+k,n+k})_{k=max(1,m+l-n+1),\dots,n+l-m}$. Then the $0-1$
law for strongly mixing processes (see Sucheston, 1963) implies that for any fixed sequence $e$ with
values in $(\E^k)^{\N}$, the probability $\pro(W^{m,n} \in
\tilde{B}\vert (\e_n)_n=e)$ equals $0$ or $1$, so that
$$
\pro\left(W^{m,n} \in \tilde{B}\right)=\pro\left((\e_n)_n \in
C\right)
$$
where $C$ is the set of sequences $e$ such that $\pro(W^{m,n} \in
\tilde{B}\vert (\e_n)_n=e)=1$. But it is easy to see that $C$ is
$T$-invariant. Indeed, if $e\in C$ then, since $W^{m,n}$ is
stationary and $\tilde{B}$ invariant,
\begin{multline*}
1=\pro(W^{m,n} \in
\tilde{B}\vert (\e_n)_n=e)=\pro(TW^{m,n} \in \tilde{B}\vert
(\e_n)_n=Te)\\=\pro(W^{m,n} \in \tilde{B}\vert (\e_n)_n=Te)
\end{multline*}
so that $Te\in C$. Now, since $(\e_n)_{n\geq 1}$ is an i.i.d. process, it is ergodic so
$\pro\left((\e_n)_n \in C\right)$ equals $0$ or
$1$. This concludes the proof of ergodicity of the sequence $W^{m,n}$.

To end with, note that for any $n\geq 0$, the random variable
$W_{0,n}$ is non positive,  ensuring the convergence of
$\{n^{-1}W_{0,n}\}$ in $\mathbb{L}_1$.
\end{proof}

\subsection{Divergence properties of Information divergence rates}
Information divergence rates should be non negative: this is
proved below. They also should be  positive for parameters  that
are different  than the true  one: we only prove it in a
particular subset of the parameter set. Let us define the set
$$
\Theta_{marg}=\left\{\theta\in\Theta_0\;:\;h^i_{J}=f, \,\forall
J\subseteq K,\,\forall i\in J\right\}.
$$
where $h^i_{J}$ denotes the $i$-th marginal of $h_J$.
\begin{thm}
\label{contrast2} Information divergence rates satisfy:
\begin{itemize}
\item For all $\theta \in \Theta_0$, $D(\theta \vert \theta_0)
\geq 0$ and $D^{*} (\theta \vert \theta_0) \geq 0$. \item If
$\theta_0$ and $\theta$ are in $\Theta_{marg}$, $D(\theta \vert
\theta_0) > 0$ and $D^{*}(\theta \vert \theta_0) > 0$ as soon as
$f\neq f_0$.
\end{itemize}
\end{thm}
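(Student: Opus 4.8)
The plan is to treat the two rates separately, because $\Q$ is not a probability distribution and it is exactly this that makes $D$ delicate, whereas $\ell$ (and hence $D^{*}$) rests on a genuine law. Throughout write $P^{(n)}_\theta$ for the (countably supported) law of the pair $(Z_n,\X_{1_k:Z_n})$ when the ancestral length is fixed to $n$, so that $\ell_n(\theta)=\log P^{(n)}_\theta(Z_n,\X_{1_k:Z_n})$.

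\textbf{Non-negativity.} For $D^{*}$ I would note that
$$
\espo\bigl[\ell_n(\theta_0)-\ell_n(\theta)\bigr]=\espo\log\frac{P^{(n)}_{\theta_0}(Z_n,\X_{1_k:Z_n})}{P^{(n)}_{\theta}(Z_n,\X_{1_k:Z_n})}
$$
is the Kullback--Leibler divergence of $P^{(n)}_{\theta_0}$ from $P^{(n)}_{\theta}$, hence $\ge 0$ by Gibbs' inequality; dividing by $n$ and using Theorem~\ref{thdivergence2} gives $D^{*}(\theta\vert\theta_0)\ge 0$. For $D$ the analogous quotient is not a divergence since $\Q$ does not sum to one, so I would instead apply Jensen, $\tfrac1n\espo\log(\Q/\Qo)\le\tfrac1n\log\espo(\Q/\Qo)$, and control $\espo(\Q/\Qo)$. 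Since $\{Z_n=z,\X_{1_k:z}=x\}\subseteq\{\exists m,\ Z_m=z;\ \X_{1_k:z}=x\}$ one has $P^{(n)}_{\theta_0}(z,x)\le\Qo(z,x)$, whence
$$
\espo\frac{\Q(\X_{1_k:Z_n})}{\Qo(\X_{1_k:Z_n})}=\sum_{z,x}P^{(n)}_{\theta_0}(z,x)\,\frac{\Q(z,x)}{\Qo(z,x)}\le\sum_{z\in R_n}\pr(\exists m,\ Z_m=z),
$$
where $R_n=\{z:P^{(n)}_{\theta_0}(z,\cdot)>0\}$. The right-hand side is the expected number of points of $R_n$ visited by the $\theta$-walk, which under Assumption~1 (unit drift) grows only polynomially in $n$, so $\tfrac1n\log\espo(\Q/\Qo)\to0$ and $D(\theta\vert\theta_0)\ge0$.

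\textbf{Strict positivity of $D^{*}$.} Here I would use the defining property of $\Theta_{marg}$: every emitted letter has marginal law $f$ (through $h^i_J=f$ in a conserved column, or directly in an inserted one), and conditionally on the hidden structure the letters of a single sequence are independent; hence under $\theta\in\Theta_{marg}$, given $Z^1_n=L$ the block $X^1_{1:L}$ is exactly $f^{\otimes L}$. Applying the data-processing inequality to the projection $(Z_n,\X_{1_k:Z_n})\mapsto(Z^1_n,X^1_{1:Z^1_n})$ and then the chain rule for relative entropy,
$$
\mathrm{KL}\bigl(P^{(n)}_{\theta_0}\Vert P^{(n)}_{\theta}\bigr)\ \ge\ \mathrm{KL}\bigl(\mathrm{law}_{\theta_0}(Z^1_n)\Vert\mathrm{law}_{\theta}(Z^1_n)\bigr)+\mathrm{KL}(f_0\Vert f)\,\espo[Z^1_n]\ \ge\ \mathrm{KL}(f_0\Vert f)\,\espo[Z^1_n].
$$
Dividing by $n$ and using $n^{-1}\espo[Z^1_n]\to1$ (Lemma~\ref{esp=1} / Assumption~1) gives $D^{*}(\theta\vert\theta_0)\ge\mathrm{KL}(f_0\Vert f)$, which is strictly positive as soon as $f\neq f_0$, since $f,f_0$ are everywhere positive on the finite alphabet $\A$.

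\textbf{Strict positivity of $D$ --- the main obstacle.} The marginalisation above only yields the one-sided bound $w(\theta)\le\sum_{x}f_0(x)\log f(x)$ (from $\Q(\X_{1_k:Z_n})\le\prod_{l}f(X^1_l)$, bounding each conserved factor by its $f$-marginal and all remaining factors by one). This is too lossy to combine with $w(\theta_0)$, because $w(\theta_0)$ lies strictly below $\sum_{x}f_0(x)\log f_0(x)$ as soon as insertions or non-degenerate joint emissions are present. The route I would instead follow is to reduce $D$ to $D^{*}$: under Assumption~1 both the true and the fitted hidden walks have unit drift, so for the typical endpoint $Z_n\approx(n,\dots,n)$ the unknown ancestral length realising it concentrates near $n$; grouping the sum defining $\Q$ by ancestral length $m$ and showing that, up to a sub-exponential factor, it is dominated by the single term $m=n$ would give $w(\theta)=\ell(\theta)$ for every $\theta$, whence $D=D^{*}$ and the previous paragraph applies. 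The hard part is precisely this length control: because $\Q$ is not normalised the data-processing machinery is unavailable, and one must establish the concentration of the hitting length uniformly enough to survive inside $\espo\log(\cdot)$ and not alter the exponential rate. This is where Assumption~1 and the restriction to $\Theta_{marg}$ are essential, and the step I expect to require the most work.
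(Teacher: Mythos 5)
Your treatment of $D^{*}$ is essentially correct and close to the paper's: nonnegativity is the same Kullback--Leibler observation, and your data-processing/chain-rule derivation of $D^{*}(\theta\vert\theta_0)\geq \mathrm{KL}(f_0\Vert f)$ is a clean repackaging of the paper's identity (\ref{re}), which says precisely that on $\Theta_{marg}$, given $Z_n=(n_1,\dots,n_k)$, the block $X^1_{1:n_1}$ has law $f^{\otimes n_1}$. But your proof of $D(\theta\vert\theta_0)\geq 0$ breaks at the counting step. Since $\Theta_0$ requires $\pi(e)>0$ for \emph{every} $e\in\E^k$ and insertion counts are unbounded, the support $R_n$ of $Z_n$ under $\theta_0$ is all of $\N^k$ for every $n\geq 1$; moreover the walk $\{Z_m\}$ is coordinatewise nondecreasing with drift $1_k$, hence visits infinitely many distinct lattice points almost surely, so
$$
\sum_{z\in R_n}\pr\big(\exists m\geq 1,\,Z_m=z\big)\;=\;\esp\Big[\#\{z\in\N^k\,:\,\exists m,\ Z_m=z\}\Big]\;=\;+\infty ,
$$
and your bound is vacuous. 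The ``polynomial growth'' intuition is imported from the pair-HMM, where steps are bounded and the support of $Z_n$ has size $O(n^2)$; it fails here. The paper avoids this by conditioning on $Z_n$ \emph{before} invoking Jensen and using $\pro(Z_n=z,\X_{1_k:Z_n}=x)\leq\Qo(x)$ (a consequence of (\ref{Qnm2})) to bound the contribution of each slice $\{Z_n=z\}$ by $\pr(\exists m\geq 1, Z_m=z)\leq 1$. (One may remark that the paper's own display here is loose about joint versus conditional probabilities; a fully rigorous repair bounds $\espo\log(\Q/\Qo)$ by the entropy of $Z_n$ under $\pro$, which is $O(\log n)$ because Assumption 1 makes $Z_n$ a $\N^k$-valued variable with mean $n1_k$. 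In any case, conditioning on $Z_n$ is the essential move your unconditional bound is missing.)

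The second, larger gap is that you do not prove strict positivity of $D$ at all: you defer it to the unproven reduction $w(\theta)=\ell(\theta)$, which the paper neither uses nor needs. The missing idea is that the factorization you exploit for $D^{*}$ transfers verbatim to the unnormalized quantity $\Q$: summing (\ref{conditional2}) over homology structures $e\in\E_{n_1,\dots,n_k}$ of \emph{arbitrary} length (not only length $n$) gives, for every $\theta\in\Theta_{marg}$,
$$
\pr\big(\exists m\geq1,\,Z_m=(n_1,\dots,n_k),\,X^1_{1:n_1}=x^1_{1:n_1}\big)=\pr\big(\exists m\geq1,\,Z_m=(n_1,\dots,n_k)\big)\,f^{\otimes n_1}(x^1_{1:n_1}),
$$
because conditionally on any homology structure the letters of sequence $1$ are i.i.d.\ with law $f$ (marginals $h^1_J=f$ in conserved columns, $f$ in inserted ones). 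With this, the paper marginalizes $\espo\log(\Q/\Qo)$ onto $(Z_n,X^1_{1:Z^1_n})$ via Jensen, bounds the joint density by $\Qo$ as above, and obtains $-D(\theta\vert\theta_0)\leq-\mathrm{KL}(f_0\Vert f)+\limsup_n n^{-1}(\text{structure term})$, where the structure term involves only $Z_n$ and the hitting probabilities and is handled exactly as in the first bullet, while Assumption 1 supplies $\espo[Z^1_n]=n$ for the letter term. So the length-concentration program you flag as ``the main obstacle'' is bypassed entirely; note also that by monotonicity $w_n(\theta)\geq\ell_n(\theta)$ always (the $m=n$ term is one summand of $\Q$), so your proposed route would in any case only need the hard reverse inequality --- but nothing of the sort is required.
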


Note that from Assumption 1 the expectation of $\e_n(1)+\e_n(2)$, $n\geq 1$, is the same for any value of the parameter. Thus, we can not establish the positivity of the information divergence rates for values of $\theta$ for which the expectation of the hidden process is different than under $\theta_0$, as it is done for pair-HMMs (Theorem 2 of Arribas-Gil {\em et al.}, 2006).

Also note that when we consider classical markovian substitution processes for the emission laws, as described in (\ref{subs_Markov}), the parameter always lies in $\Theta_{marg}$, since the marginal emission distributions are equal to the stationary distribution of the Markov process.\\
\\
\begin{proof}. Since for all $n$,
$$
\espo \left(\log \pro(\X_{1_k:Z_{n}}) \right)- \espo \left(\log
\pr(\X_{1_k:Z_{n}}) \right)
$$
is a Kullback-Leibler divergence, it is non negative, and the
limit
$D^{*}(\theta \vert \theta_0)$ is also non negative.

Let us prove that $D(\theta \vert \theta_0)$ is also  non
negative. To compute  the value  of the expectation  $\espo [w_{n}
(\theta )]$, note that the set of all possible values of $Z_n$ is
$\N^k$. Then,
\begin{multline*}
\espo [w_{n} (\theta)]\\=\!\sum_{(n_1,\dots,n_k)\in \N^k }
\sum_{(x^i_{1:n_i})_{i=1,\dots,k}} \!\!\pro\big(Z_{n} =(n_1,\dots,n_k),
X^1_{1:n_1}=x^1_{1:n_1},\dots,X^k_{1:n_k}=x^k_{1:n_k} \big) \\
\phantom{=\sum_{(n_1,\dots,n_k)\in \N^k }
\sum_{(x^i_{1:n_i})_{i=1,\dots,k}} \pro\big(Z_{n} =(n_1,\dots,n_k),
X^1_{1:n_1},,}\times\log \Q(x^1_{1:n_1},\dots,x^k_{1:n_k} ).
\end{multline*}
Now, by definition,
$$
D\left(\theta \vert \theta_0 \right ) = \lim_{n\rightarrow
+\infty} \frac{1}{n} \espo \left(\log
\frac{\Qo(\X_{1_k:Z_n})}{\Q(\X_{1_k:Z_n})}\right).
$$
By using Jensen's inequality,
\begin{equation*}
\espo \!\left(\!\log \frac{\Q(\X_{1_k:Z_n})}{\Qo(\X_{1_k:Z_n})}\!\right)\!
\leq \log \espo\!
\left(\frac{\Q(\X_{1_k:Z_n})}{\Qo(\X_{1_k:Z_n})}\right)\!= \log
\espo\! \left[
\espo\!\left(\frac{\Q(\X_{1_k:Z_n})}{\Qo(\X_{1_k:Z_n})}\right)\!\big|
Z_n \right]\!.
\end{equation*}
Now, for all $(n_1,\dots,n_k )\in \N^k$
\begin{multline*}
 \espo\left(\frac{\Q(\X_{1_k:Z_n})}{\Qo(\X_{1_k:Z_n})}\big|
 Z_n=(n_1,\dots,n_k)\right)\\
=\!\sum_{(x^i_{1:n_i})_{i=1}^k}\!\!\!\pro\big(Z_{n}\!=\!(n_1,\mbox{\tiny $\dots$},n_k),
X^1_{1:n_1}\!\!=x^1_{1:n_1},\mbox{\tiny $\dots$},X^k_{1:n_k}\!\!=x^k_{1:n_k}
\big) \frac{\Q(x^1_{1:n_1},\mbox{\tiny $\dots$},x^k_{1:n_k}
)}{\Qo(x^1_{1:n_1},\mbox{\tiny $\dots$},x^k_{1:n_k} )}\\
 \stackrel{(a)}{\leq}\sum_{(x^i_{1:n_i})_{i=1}^k} \pr\big(\exists m\geq  1, Z_m
 =(n_1,\dots,n_k),\,
 X^1_{1:n_1}=x^1_{1:n_1},\dots,X^k_{1:n_k}=x^k_{1:n_k}\big)\phantom{\stackrel{(a)}{\leq}\sum_{(x^i_{1:n_i})_{i=1,\dots,k}} }\\
 \phantom{\stackrel{(a)}{\leq}\sum_{(x^i_{1:n_i})_{i=1,\dots,k}} \pr\big(\exists m\geq  1, Z_m
 =xxxxx)}=\pr\big(\exists m\geq  1, Z_m =(n_1,\dots,n_k)\big)\leq 1
\end{multline*} where $(a)$ comes from expression (\ref{Qnm2}).
Thus, $\espo \left[
\espo\left(\frac{\Q(\X_{1_k:Z_n})}{\Qo(\X_{1_k:Z_n})}\right)\big|
Z_n \right] \leq 1$, and
\begin{equation*}
\lim_{n\rightarrow
+\infty}\frac{1}{n}\left(w_{n}(\theta)-w_{n}(\theta_0) \right)\leq
\liminf_{n\rightarrow +\infty}\frac{1}{n} \log \espo \left[
\espo\left(\frac{\Q(\X_{1_k:Z_n})}{\Qo(\X_{1_k:Z_n})}\big| Z_n
\right)\right]\leq 0.
\end{equation*}
So finally
$$
\forall \theta \in \Theta_0, \;D(\theta \vert \theta_0) \geq 0.
$$

Let us now consider the case where $\theta_0$ and $\theta$ are in
$\Theta_{marg}$. Let us remark that for any $\theta \in
\Theta_{marg}$ we have
\begin{multline}\label{re}
\pr\big(Z_n=(n_1,\dots,n_k),\,X^1_{1:n_1}=x^1_{1:n_1}\big)\\
=\sum_{(x^i_{1:n_i})_{i=2}^k}\pr\big(Z_n
 =(n_1,\dots,n_k),\,X^1_{1:n_1}=x^1_{1:n_1},\dots,X^k_{1:n_k}=x^k_{1:n_k}\big)
 \phantom{\pr\big(Z_n=(n_1,\dots,n_k),\,X^1_{1:n_1}=x^1_{1:n_1}\big)}\\
=\sum_{\substack{e \in \E_{n_1,\dots,n_k}\\|e|=n}}
\sum_{(x^i_{1:n_i})_{i=2}^k}\pr(\e_{1:n}
 =e,\,
 X^1_{1:n_1}=x^1_{1:n_1},\dots,X^k_{1:n_k}=x^k_{1:n_k}) \phantom{\pr\big(Z_n=(n_1,\dots,n_k),\,X^1_{1:n_1}=x^1_{1:n_1}\big)}\\
=\sum_{\substack{e \in \E_{n_1,\dots,n_k}\\|e|=n}}\sum_{(x^i_{1:n_i})_{i=2}^k}\pr(\e_{1:n}
 =e)\,\pr(
 X^1_{1:n_1}=x^1_{1:n_1},\dots,X^k_{1:n_k}=x^k_{1:n_k}|\e_{1:n}
 =e)\phantom{wwwwww}\\
=\pr\big(Z_n=(n_1,\dots,n_k)\big)f^{\otimes n_1} (x^1_{1:n_1})
\end{multline}
where the last equality comes from (\ref{conditional2}). In the same
way, for any $\theta \in \Theta_{marg}$ we have that $\pr(\exists m\leq
1, Z_m=(n_1,\dots,n_k), X^1_{1:n_1}=x^1_{1:n_1})=\pr(\exists m\leq
1, Z_m=(n_1,\dots,n_k))f^{\otimes n_1} (x^1_{1:n_1})$. This is also
true for any other sequence $X^i_{1:n_i}$, $i=1\dots,k$. Then, using Jensen's inequality and definition \eqref{Qnm2},
\begin{multline*}
  \espo  \left(  \log
  \frac{\Q(\X_{1:Z_n}) } {\Qo(\X_{1:Z_n})}\right)  \\
=  \sum_{(n_1\mbox{\tiny $\dots$},n_k)  \in \N^k}  \sum_{(x^i_{1:n_i})_{i=1,\mbox{\tiny $\dots$},k}}  \pro\big(Z_n
=(n_1\mbox{\tiny $\dots$},n_k), \,
 X^1_{1:n_1}=x^1_{1:n_1},\mbox{\tiny $\dots$},X^k_{1:n_k}=x^k_{1:n_k}\big) \phantom{wwwwwwwwwwwwwww}\\\times \log
\frac{\Q(x^1_{1:n_1},\mbox{\tiny $\dots$},x^k_{1:n_k}) } {\Qo(x^1_{1:n_1},\mbox{\tiny $\dots$},x^k_{1:n_k})}\quad
\!\!\leq \! \!\!\sum_{(n_1\mbox{\tiny $\dots$},n_k)  \in \N^k}  \sum_{x^1_{1:n_1}}
 \pro\big(Z_n=(n_1,\mbox{\tiny $\dots$},n_k),\,X^1_{1:n_1}=x^1_{1:n_1}\big) \\
\times\log \mbox{\smalleq$\left(\sum_{(x^i_{1:n_i})_{i=2}^k} \!\!\! \frac{\pro\big(Z_n
=(n_1,\mbox{\tiny $\dots$},n_k),\,
 X^1_{1:n_1}=x^1_{1:n_1},\mbox{\tiny $\dots$},X^k_{1:n_k}=x^k_{1:n_k}\big)\Q(x^1_{1:n_1},\mbox{\tiny $\dots$},x^k_{1:n_k} ) }
{\pro\big(Z_n
=(n_1\mbox{\tiny $\dots$},n_k),\,X^1_{1:n_1}=x^1_{1:n_1}\big)  \Qo(x^1_{1:n_1},\mbox{\tiny $\dots$},x^k_{1:n_k}) }\!
\right)$} \\
\leq \sum_{(n_1\mbox{\tiny $\dots$},n_k)  \in \N^k}  \sum_{x^1_{1:n_1}}
 \pro\big(Z_n=(n_1,\mbox{\tiny $\dots$},n_k)\big) f_0^{\otimes n_1} (x^1_{1:n})\phantom{sssssssssssssssssssssssssssssssssssssssssss}\\  \phantom{sssssssssssssssssssssssss}\times \log \left(\frac{ \pr\big(\exists m\geq 1, Z_m=(n_1,\mbox{\tiny $\dots$},n_k)\big) f^{\otimes n_1} (x^1_{1:n_1}) } {\pro\big(Z_n
=(n_1,\mbox{\tiny $\dots$},n_k)\big) f_0^{\otimes n_1} (x^1_{1:n_1}) } \right)\!,
\end{multline*}
where the last inequality comes from (\ref{re}) and the
fact that $$\pro\big(Z_n =(n_1,\dots,n_k),\,
 X^1_{1:n_1}=x^1_{1:n_1},\dots,X^k_{1:n_k}=x^k_{1:n_k}\big) \leq \Qo(x^1_{1:n_1},\dots,x^k_{1:n_k}).$$
Thus, we have
\begin{multline*}
  -D(\theta|\theta_0)\leq       \limsup_{n   \to   +\infty}   \frac{1}{n}
\sum_{(n_1\dots,n_k)  \in \N^k}   \pro(Z_n=(n_1,\dots,n_k)) \phantom{sssssssssssssssssssssssssssssssssss}\\ \phantom{sssssssssssss}\times \Big\{\log \frac {\pr\big(\exists m\geq 1, Z_m=(n_1,\dots,n_k)\big)}
{\pro\big(Z_n
=(n_1,\dots,n_k)\big)}+ n_1\sum_x f_0(x)\log \frac{f(x)}{f_0(x)} \Big\}\\
\leq \limsup_{n   \to   +\infty}   \frac{1}{n}  \Big\{\log \sum_{(n_1\dots,n_k)  \in \N^k} \pr\big(\exists m\geq 1,
Z_m=(n_1,\dots,n_k)\big)\phantom{sssssssssssssssssssssssss}\\
\phantom{sssssssssssssssssss}+ \sum_{(n_1\dots,n_k)  \in \N^k} \pro\big(Z_n=(n_1,\dots,n_k)\big) n_1 \sum_x f_0(x)\log \frac{f(x)}{f_0(x)}\Big\}\\
\leq \limsup_{n   \to   +\infty}   \frac{1}{n} \! \Big\{\! \espo[Z^1_n]\!\sum_x \!f_0(x)\log \frac{f(x)}{f_0(x)}\! \Big\}\!=\!
\limsup_{n   \to   +\infty}   \frac{1}{n}   \Big\{n\! \sum_x f_0(x)\log \frac{f(x)}{f_0(x)} \Big\}\!<\!0,
\end{multline*}
as soon as $f\neq f_0$, since $\espo[Z^1_n]=n$ from Assumption 1.

The proof for $D^{*}$ follow the same lines.
\end{proof}\\

It would be interesting to prove the uniqueness of the maximum of
the functions $\ell(\theta)$ and $w(\theta)$ at the true value of
the parameter $\theta_0$. If that was true, the consistency of
maximum likelihood and bayesian estimators would be obtained with
classical arguments (see Arribas-Gil {\em et al.}, 2006). In Section \ref{simus} we
investigate the behavior of functions $\ell(\theta)$ and
$w(\theta)$ via some simulations.
\section{Extension to the case of an arbitrary tree}
Let us now consider an arbitrary phylogenetic tree, that is, a tree with inner nodes such as the one in Figure \ref{arb_tree} (a). Without loss of generality we can assume that we deal with a binary tree (the number of edges going out from every inner node is equal to two) in which the length of the path from the root to each leaf is the same for every leaf in the tree. There is an example of this kind of tree in Figure \ref{arb_tree} (b). Indeed, we will only use this fact to simplify notations, since it allow us to describe the evolutionary behavior of any internal node in a general way and define the model in a simpler manner. Otherwise, the state space of the hidden process would depend on the particular structure of the tree, but the results given in this section still hold.
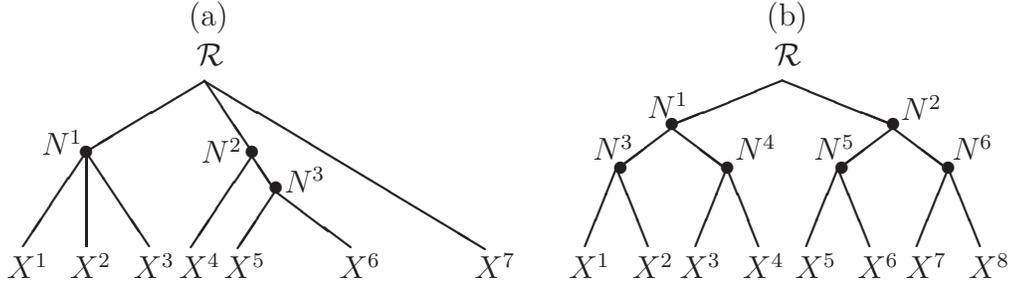
\begin{figure}
\setlength{\unitlength}{1.05mm}
\begin{picture}(60,30)
\thicklines
\put(30,25){\line(-5,-3){15}}
\put(15,16){\line(-2,-3){8}}
\put(15,16){\line(2,-3){8}}
\put(15,16){\line(0,-1){12}}
\put(30,25){\line(5,-3){35.5}}
\put(30,25){\line(2,-3){6}}
\put(36,15.5){\line(-2,-3){7.7}}
\put(36,15.5){\line(2,-3){3}}
\put(39,11){\line(-2,-3){4.8}}
\put(39,11){\line(4,-3){9.5}}

\put(29,27){${\cal R}$}
\put(14,15){$\bullet$}
\put(9.4,15.6){$N^1$}
\put(35,15){$\bullet$}
\put(29.6,14.6){$N^2$}
\put(5,0){$X^1$}
\put(13,0){$X^2$}
\put(21,0){$X^3$}
\put(26.8,0){$X^4$}
\put(38,10.5){$\bullet$}
\put(40.1,10.9){$N^3$}
\put(32.5,0){$X^5$}
\put(47.1,0){$X^6$}
\put(64.2,0){$X^7$}
\put(28,32){(a)}
\end{picture} \hspace{1.2cm}
\begin{picture}(60,30)
\thicklines
\put(30,25){\line(-5,-2){14}}
\put(16,19){\line(-4,-3){7}}
\put(16,19){\line(4,-3){7}}
\put(9,14){\line(-2,-5){4}}
\put(9,14){\line(2,-5){4}}
\put(23,14){\line(-2,-5){4}}
\put(23,14){\line(2,-5){4}}
\put(30,25){\line(5,-2){14}}
\put(44,19){\line(-4,-3){7}}
\put(44,19){\line(4,-3){7}}
\put(51,14){\line(-2,-5){4}}
\put(51,14){\line(2,-5){4}}
\put(37,14){\line(-2,-5){4}}
\put(37,14){\line(2,-5){4}}

\put(29,27){${\cal R}$}
\put(15,18.5){$\bullet$}
\put(12.9,20.3){$N^1$}
\put(8.5,13){$\bullet$}
\put(6,15){$N^3$}
\put(22,13){$\bullet$}
\put(24,15){$N^4$}
\put(3,0){$X^1$}
\put(11,0){$X^2$}
\put(17,0){$X^3$}
\put(25,0){$X^4$}
\put(43,18.5){$\bullet$}
\put(45,20){$N^2$}
\put(50,13){$\bullet$}
\put(51.5,15){$N^6$}
\put(36.5,13){$\bullet$}
\put(34,15){$N^5$}
\put(53.6,0){$X^8$}
\put(45.6,0){$X^7$}
\put(39.5,0){$X^6$}
\put(31.6,0){$X^5$}

\put(28,32){(b)}
\end{picture}
\caption{\emph{(a) An arbitrary phylogenetic tree. (b) A binary phylogenetic tree. ${\cal R}$ stands for the ancestral sequence, $N^i$ stand for sequences in inner nodes (non-observed sequences), and $X^i$ stand for observed sequences.}}\label{arb_tree}
\end{figure}

The multiple-hidden i.i.d. model on a binary tree with $k$ observed sequences, ${\cal T}_{2,k}$, is defined as follows. Consider a sequence of i.i.d. random variables $\{\e_n\}_{n\geq 1}$ on the state space
\begin{multline*}
\E^{{\cal T}_{2,k}}= \left\{ e \in {\cal M}_{(2^k-1),2m}\, ,m \in \N;\,\, e^{h}_{p}\in \{(1,0), (0,0)\}, \phantom{\E^2}\right.\\
\left. e^{\{a_i,i,i'\}}_{p}\in \{(1,0)\!\times \!\E^2\!,\, 0_{3,2}\},\, \mbox{\small $p\!=\!1,\dots,m, \forall h\!\in \!I,\forall i,i'\!\in\!O, i\sim i'$}\right\}
\end{multline*}
where ${\cal M}_{a,b}$ denotes the set of all $a$-by-$b$ natural matrices, $0_{3,2}$ denotes the $3$-by-$2$ null matrix, $I$ denotes the set of internal nodes (unobserved sequences) of the tree and $O$ denotes the set of external nodes (observed sequences) of the tree. For an observed sequence $i$, $a_i$ stands for its direct ancestor, that is, the sequence that is placed in its closest internal node. For two observed sequences $i$ and $i'$, we write $i\sim i'$ if they share the same direct ancestor (that is, $a_i=a_{i'}$). For $e$ in $\E^{{\cal T}_{2,k}}$, $e^{\{i,j,h\}}_{p}$ is the sub-matrix of $e$ composed by rows $i,j,h$ and columns $2(p-1)+1$ and $2p$.

An element $e$ of $\E^{{\cal T}_{2,k}}$ represents the fate of a nucleotide in the root sequence and all the insertions produced at the different levels of the tree. It is a finite sequence of $(2^k-1)$-by-$2$ matrices, in which each row represents one node (sequence) of the tree. We will assume that the first row represents the sequence at the root. The first $(2^k-1)$-by-$2$ matrix represent the fate of the nucleotide at the root in the first column (whether it is conserved, 1, or deleted, 0, in each one of the sequences) and the number of insertions produced to its right in the observed sequences (second column). The difference with the star-tree case, is that now we may also have (non-observed) insertions in the internal sequences. They appear in the following $(2^k-1)$-by-$2$ matrices, represented by 1 in the corresponding position of the first column, where we also represent the fate of the inserted nucleotide and the number of insertions produced to its right in the corresponding descendant sequences. The rows of $e$ that are not concerned by that insertion (because the corresponding sequences are not descendant of that internal sequence) may represent the fate of another inserted nucleotide in a different internal sequence. That is why in the same $(2^k-1)$-by-$2$ matrix we may represent independent events in different rows. Indeed, the events represented in two different rows $i$ and $j$ of the same $(2^k-1)$-by-$2$ matrix are independent if the row corresponding to the closest common ancestor of $i$ and $j$ in that matrix takes the value $(0,0)$, and they are dependent if it takes the value $(1,0)$. There is an example of an element of $\E^{{\cal T}_{2,k}}$ in Figure~\ref{Ex_arb_tree}.
\begin{figure}\begin{center}
{\scriptsize \textbf{Homology structure}  \hspace{5cm} \textbf{\emph{Bare} alignment}\vspace{0.4cm}}\\
{\scriptsize\begin{tabular}{rl|l|l|l|l|l|l|l}
$R$:&        1\,0&0\,0&0\,0&0\,0&0\,0&0\,0&0\,0&0\,0\\
$N^1$:&      1\,0&0\,0&0\,0&1\,0&0\,0&0\,0&1\,0&0\,0\\
$N^3$:&      1\,0&1\,0&0\,0&1\,0&0\,0&0\,0&1\,0&1\,0\\
$X^1$:&      1\,2&1\,0&0\,0&1\,0&0\,0&0\,0&1\,0&1\,0\\
$X^2$:&      1\,0&0\,0&0\,0&1\,1&0\,0&0\,0&1\,2&0\,1\\
$N^4$:&      1\,0&0\,0&0\,0&1\,0&1\,0&1\,0&1\,0&0\,0\\
$X^3$:&      1\,0&0\,0&0\,0&1\,0&1\,0&1\,0&1\,1&0\,0\\
$X^4$:&      1\,1&0\,0&0\,0&1\,1&1\,1&0\,1&0\,0&0\,0\\
$N^2$:&      1\,0&0\,0&0\,0&1\,0&0\,0&0\,0&0\,0&0\,0\\
$N^5$:&      1\,0&1\,0&1\,0&1\,0&0\,0&0\,0&0\,0&0\,0\\
$X^5$:&      1\,0&1\,0&1\,0&1\,0&0\,0&0\,0&0\,0&0\,0\\
$X^6$:&      0\,1&1\,0&0\,1&1\,1&0\,0&0\,0&0\,0&0\,0\\
$N^6$:&      1\,0&1\,0&0\,0&1\,0&1\,0&0\,0&0\,0&0\,0\\
$X^7$:&      1\,0&1\,0&0\,0&0\,0&1\,0&0\,0&0\,0&0\,0\\
$X^8$:&      1\,0&0\,2&0\,0&1\,0&1\,2&0\,0&0\,0&0\,0\\
\end{tabular}}\hspace{0.5cm}
{\scriptsize\begin{tabular}{rl}
$X^1$:&      \texttt{BBB--B------B-----------B---B-}\\
$X^2$:&      \texttt{B-----------BB----------BBB--B}\\
$X^3$:&      \texttt{B-----------B--B-B------B--B--}\\
$X^4$:&      \texttt{B--B--------B-BBB-B-----------}\\
$X^5$:&      \texttt{B-----BB-----------B----------}\\
$X^6$:&      \texttt{----B-B-B----------BB---------}\\
$X^7$:&      \texttt{B--------B-----------B--------}\\
$X^8$:&      \texttt{B---------BB-------B-BBB------}
\end{tabular}}
\end{center}
\caption{An element $e$ of $\E^{{\cal T}_{2,8}}$ (corresponding to the phylogenetic tree in Figure \ref{arb_tree} (b)) and a possible representation of the associated multiple alignment (the choice of the order in which insertions appear is arbitrary). Vertical lines separate the different $15$-by-$2$ submatrices. In this example, $|e|=8$.}\label{Ex_arb_tree}
\end{figure}

For $e\in \E^{{\cal T}_{2,k}}$ such that $e \in {\cal M}_{(2^k-1),2m}$, $m \in \N$, we will note $|e|=m$. Also, for any $(2^k-1)$-by-$2$ submatrix $e_p$, $1\leq p \leq |e|$, we will note $\|e_p\|=e_p(1)+e_p(2)$, that is, the sum of the two columns of $e_p$. $e^{obs}$ will denote the $k$-by-$2|e|$ matrix whose rows are the rows on $e$ corresponding to the observed sequences. For any internal node (non-observed sequence) $i\in I$, $d_i$ will denote the set of the two direct descendants of $i$, and $D_i$ will denote the set of all the descendants of $i$ which are observed sequences. 
Also, for any sequence $i$, and any $p$, $1\leq p \leq |e|$, such that $e_p^{a_i}(1)=1$, we will denote $\overline{\|e_p^i\|}=\sum_{r=p}^q \|e_r^i\|$, where $q$ is such that $e_r^{a_i}(1)=0$ for $r=p+1,\dots,q-1$ and $e_q^{a_i}(1)=1$. $\overline{\|e_p^i\|}$ represents the total number of descendants in sequence $i$ of the given nucleotide from sequence $a_i$. If $i$ is one of the two direct descendants of the root then $\overline{\|e_1^i\|}=\sum_{r=1}^{|e|} \|e_r^i\|$. Note that if $i$ stands for an observed sequence (external node), for any $p$ such that $e_p^{a_i}(1)=1$, $\overline{\|e_p^i\|}=\|e_p^i\|$. The same notations apply to the random process $\{\e_n\}_{n\geq 1}$.

In the case in which we consider the TKF91 indel model, due to the branch independence, the law of $\e_n$, is given by
\begin{equation*}\label{arb_loi_e}
\mathbb{P}_{\lambda}\left(\e_n\!=\!e\right)\!=\!\prod_{p=1}^{|e|}\prod_{\substack{i=2;\\ \mbox{\tiny$e_p^{a_i}\!(1)\!\!=\!\!1$}}}^{2^k-1}
\left(\!q^H_{\substack{\phantom{i}\\ \mbox{\tiny$\overline{\|e_p^i\|}$}}}(t_i)\!\right)^{\!\!\1{e_p^{i}(1)=1}}\!\!
\left(\!q^N_{\substack{\phantom{i}\\ \mbox{\tiny$\overline{\|e_p^i\|}$}}}(t_i)\!\right)^{\!\!\1{e_p^{i}(1)=0}}\!\!, \,
e\in \E^{{\cal T}_{2,k}}, n\geq 1
\end{equation*}\\
where $t_i$ represents the evolutionary time between sequences $i$ and $a_i$. In the general case we will note $\pi$ the law of $\e_n$.

As in the star tree case, the process $\{\e_n\}_{n\geq 1}$ generates a random walk
$\{Z_n\}_{n \geq 0}$ with values on $\N^k$ by letting $Z_0
=0_k$ and $Z_n =\sum_{1\leq j\leq n} \sum_{1\leq p\leq |\e_j|}\|\e_{j_p}^{obs} \|$ for
$n\geq 1$. The coordinate random variables corresponding to $Z_n$
at position $n$ are denoted by $(Z^1_n,\dots,Z^k_n)$ ({\it i.e.}
$Z_n=(Z^1_n,\dots,Z^k_n)$).

Let us now describe the emission of the observed sequences which take values on a finite alphabet ${\cal A}$. We distinguish to kinds of emissions, joint emissions across $k$ or a smaller number of sequences (corresponding to $\e_{n_p}(1)$, $1\leq p \leq |\e_n|$) and single emissions (corresponding to $\e_{n_p}(2)$, $1\leq p \leq |\e_n|$).
For $n\geq1$, and for $1\leq p \leq |\e_n|$, if  $\e_{n_p}^{\tau}(1)=1$ and $\e_{n_p}^{a_\tau}(1)=0$ for any $\tau \in I$, then a vector of
$r=|\{i\in D_{\tau}|\e_{n_p}^{i}(1)=1\}|$ r.v. is emitted according to some probability
distribution $h_J$, $J=\{i\in D_{\tau}|\e_{n_p}^{i}(1)=1\}$, on ${\cal A}^r$ and
$\sum_{i\in D_{\tau}} \e_{n_p}^{i}(2)$ r.v. $\{ X^i_{1:\e_{n_p}^{i}(2)}\}$,
$i\in D_{\tau}$, are emitted according to the following
scheme: $\{X^i_j\}^{i\in D_{\tau}}_{1,\e_{n_p}^{i}(2)}$ are independent and
identically distributed from some probability distribution $f$ on
${\cal A}$.
\begin{remark}
In practice, the emission law $h$, may take into account the emissions in internal sequences. Consider, for instance, the emission in the first column of the homology structure of Figure \ref{Ex_arb_tree}. If we deal with a classical markovian  substitution model, with stationary distribution $\nu$ and  transition probability matrix $p_t(\cdot,\cdot)$, the emission of nuleotides $x^1,\dots,x^5,x^7,x^8$ in sequences $X^1,\dots,X^5,X^7,X^8$ would have probability
\begin{multline*}
h_{\{1,\dots,5,7,8\}}(x^1,\dots,x^5,x^7,x^8) \\
=\sum_{R \in \A} \nu(R)\times \left\{ \left( \sum_{\tau_1 \in A} p_{{s_1}}
(R,\tau_1) \left[\sum_{\tau_3 \in A} p_{{s_3}}
(\tau_1,\tau_3)  p_{t_1}(\tau_3,x^1) p_{t_2}(\tau_3,x^2) \right] \right.\right. \hspace{3cm}\\ \phantom{a}\hspace{6cm} \left.
\times \left[\sum_{\tau_4 \in A} p_{{s_4}}
(\tau_1,\tau_4)  p_{t_3}(\tau_4,x^3) p_{t_4}(\tau_4,x^4) \right] \right)\\ 
\times \left( \sum_{\tau_2 \in A} p_{{s_2}}
(R,\tau_2) \left[\sum_{\tau_5 \in A} p_{{s_5}}
(\tau_2,\tau_5)  p_{t_5}(\tau_5,x^5) \right] \right.\hspace{6cm}\\ \phantom{a}\hspace{5cm} \left. \left.
\times \left[\sum_{\tau_6 \in A} p_{{s_6}}
(\tau_2,\tau_6)  p_{t_7}(\tau_6,x^7) p_{t_8}(\tau_6,x^8) \right] \right) \right\}
\end{multline*}
where $R$ represents the nucleotide in the root, $\tau_i$ the nucleotide in internal sequence $N^i$, $s_i$ the evolution time to internal sequence $N^i$ from its direct ancestor and $t_i$ the evolution time to observed sequence $X^i$  from its direct ancestor.  

\end{remark}
As in the star tree case, conditionally to the process $\{\e_n\}_{n\geq 1}$, the
random variables emitted at different instants are independent. The whole multiple-hidden i.i.d. model is described by the parameter $\theta=(\pi,\,\{h_J\}_{J\subseteq K},\,f)\in \Theta$.

The conditional distribution of the observations given an homology structure $e_{1:n}=(e_j)_
{1\leq j\leq n}$, writes
\begin{eqnarray}\label{arbicondi}
&&\pr(\X_{1_k:Z_n} |  \e_{1:n}=e_{1:n} ) = \prod_{j=1}^n  \pr(\X_{Z_{j-1}+1_k:Z_j} |  \e_j=e_j ) \nonumber\\
&=&\prod_{j=1}^n \prod_{p=1}^{|e_j|} \Big\{ \prod_{\substack{\tau\in I;\\ \mbox{\tiny$e_{j_p}^{\tau}\!(1)\!\!=\!\!1$}\\ \mbox{\tiny $e_{j_p}^{a_{\tau}}\!(1)\!\!=\!\!0$}}}   h_{\{i\in D_{\tau} | e_{j_p}^{i}(1)=1\}}
\left(\{X^{i}_{Z^{i}_{j-1}+\sum_{r=1}^{p-1}\|e_{j_p}^{i}\|+1}\}_{\{i\in D_{\tau} | e_{j_p}^{i}(1)=1\}}\right) \Big\}  \nonumber\\
&& \hspace{1cm}\times \Big\{ \prod_{i\in O} \prod_{s=1}^{e_{j_p}^i(2)}
 f\big(X^i_{Z^i_{j-1}+\sum_{r=1}^{p-1}\|e_{j_p}^i\|+e_{j_p}^i(1) +s}\big) \Big\}.
\end{eqnarray}

And the complete distribution $\pr$ is given by
\begin{eqnarray*}
&&\pr(\e_{1:n}=e_{1:n},\X_{1_k:Z_n})=  \pr(\X_{1_k:Z_n} | \e_{1:n}=e_{1:n} ) \prod_{j=1}^n
\pi(e_j).
\end{eqnarray*}

At this point we can define the parameter set $\Theta_0$, likelihoods $\omega_n (\theta)$ and $\ell_n(\theta)$ and divergence rates $D (\theta \vert \theta_0)$ and $D^{*}(\theta \vert \theta_0)$ in the same way as in the star-tree case. Indeed Theorem 1 also holds in this case. Moreover, since we do not exploit any specific characteristic of $\pi$ or the emission laws to prove this result, the proof is exactly the same as the one given in Section \ref{Inf_Div}. The only slightly difference appears when proving point 3, but it is clear that $\pr(\X_{1_k:Z_1})>0$ also holds in this case for $\theta \in \Theta_0$.

By analogy to the star tree case, we will establish an assumption to ensure that asymptotic results for $n \to \infty$
will imply equivalent ones for $n_i \to \infty,\, i=1,\dots,k$. It also guarantees that $\esp[Z_n]=n$, for $n\in \N$, as it is required to prove Theorem 2.
\begin{assumption} In the multiple-hidden \vspace{-0.3cm}i.i.d. model on a binary tree $\esp\,\big[\displaystyle\sum_{p=1}^{|\e_n|}\|\e_{n_p}^{obs}\|\big]=1_k$, for $n\geq 1$, for any $\theta\in\Theta$.
\end{assumption}
This assumption holds for the multiple-hidden i.i.d. model under the TKF91 indel evolution process as it is shown in the following lemma.
\begin{lemma}\label{esp=1arbitrary}
In the multiple-hidden i.i.d. model on a binary tree under the TKF91 indel evolution process, for any $\lambda >0$ we have $Z_n^i \sim n$, $i=1,\dots,k$, $\mathbb{P}_{\lambda}$-almost surely.
\end{lemma}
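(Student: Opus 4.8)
The plan is to reduce the statement, exactly as in Lemma~\ref{esp=1}, to the strong law of large numbers together with the computation of a single expectation. First I would set, for each observed leaf $i\in\{1,\dots,k\}$ and each $j\geq 1$,
$$
Y_j^i=\sum_{p=1}^{|\e_j|}\|\e_{j_p}^i\|=\sum_{p=1}^{|\e_j|}\big(\e_{j_p}^i(1)+\e_{j_p}^i(2)\big),
$$
the total number of positions of sequence $i$ that descend from the $j$-th ancestral (root) nucleotide. By the very definition of the random walk, $Z_n^i=\sum_{j=1}^n Y_j^i$, and since $\{\e_j\}_{j\geq 1}$ are i.i.d., the variables $\{Y_j^i\}_{j\geq 1}$ are i.i.d. as well. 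Hence, provided $\mathbb{E}_{\lambda}[Y_1^i]<\infty$, the strong law of large numbers gives $n^{-1}Z_n^i\to \mathbb{E}_{\lambda}[Y_1^i]$, $\mathbb{P}_{\lambda}$-almost surely, and the lemma amounts to showing $\mathbb{E}_{\lambda}[Y_1^i]=1$ for every $i$.

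The heart of the argument is therefore to compute $\mathbb{E}_{\lambda}[Y_1^i]$, which now involves a composition of TKF91 link processes rather than a single edge. I would exploit the branching structure of the model: along the unique path $\mathcal{R}=v_0\to v_1\to\cdots\to v_{d_i}=i$ from the root to leaf $i$, let $M_r$ denote the number of links present at node $v_r$ descending from the single root link, so that $M_0=1$ and $M_{d_i}=Y_1^i$. Since links evolve independently (Thorne {\em et al.}, 1991) and the same TKF91 process with $\lambda=\mu$ runs on every edge, each link at $v_r$ produces, on the edge toward $v_{r+1}$, a number of descendants that is independent across links and has mean exactly $1$; this per-edge mean is precisely the quantity $\sum_{m\geq 1}m\,(q^H_m(s)+q^N_m(s))=1$ already evaluated in the proof of Lemma~\ref{esp=1} from the expressions in (\ref{newprocess}), and it is valid for any edge time $s>0$ (links diverted into the sibling subtree simply do not contribute to $Y_1^i$, by branch independence). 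Consequently $\mathbb{E}_{\lambda}[M_{r+1}\mid M_r]=M_r$, so $(M_r)_{0\leq r\leq d_i}$ is a nonnegative martingale and $\mathbb{E}_{\lambda}[Y_1^i]=\mathbb{E}_{\lambda}[M_{d_i}]=M_0=1$.

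The step I expect to be the main obstacle is the rigorous justification that the number of descendant links behaves as a critical multi-level branching process, that is, that conditionally on $M_r$ the descendants produced by the $M_r$ links on the next edge are i.i.d. with the single-link law, and that the insertions created at internal nodes are correctly absorbed into this recursion rather than being double counted. Once the branching property is in place, the tower property closes the computation immediately, and the finiteness of $\mathbb{E}_{\lambda}[Y_1^i]$ needed for the strong law comes for free from the martingale identity. Finally I would remark that this computation simultaneously verifies Assumption~2 under the TKF91 model, since $\mathbb{E}_{\lambda}\big[\sum_{p=1}^{|\e_1|}\|\e_{1_p}^{obs}\|\big]=(\mathbb{E}_{\lambda}[Y_1^1],\dots,\mathbb{E}_{\lambda}[Y_1^k])=1_k$, which is exactly what is required for Theorem~\ref{contrast2} to carry over to the arbitrary-tree setting.
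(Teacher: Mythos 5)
Your proposal is correct and follows essentially the same route as the paper: the paper's proof also reduces $Z_n^i$ to i.i.d.\ summands $\sum_{p=1}^{|\e_j|}\|\e_{j_p}^i\|$, applies the strong law of large numbers, and establishes the mean-one property by conditioning on the number of links at the direct ancestor and iterating the tower property up the root-to-leaf path, anchored on the per-edge computation $\sum_{m\geq 1}m\,(q^H_m(t)+q^N_m(t))=1$ from Lemma~\ref{esp=1}. Your martingale formulation $\mathbb{E}_{\lambda}[M_{r+1}\mid M_r]=M_r$ is just a repackaging of the paper's recursive chain of equalities (its steps (a)--(c)), so the two arguments coincide in substance.
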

\begin{proof}.
We have already proved this result in the case of a star phylogenetic tree (Lemma \ref{esp=1}), that is, when we have a tree without internal nodes. Now, the idea of the proof, is that, if at each level of the tree the expectation of the number of nucleotides descending (conserved plus inserted) from a single nucleotide in the parent sequence is 1, the expectation of the total number of nucleotides at each observed sequence descending from a single nucleotide in the root sequence will also be 1. Let us show it recursively. 

Let $L$ be the total number of levels on the tree, that is the number of edges between the root and an observed sequence (in the case of a binary tree, $L=\ln_2 k$). For each observed sequence $i\in Obs$, we will note $a_i^l$, $l=1\dots,L$ the $l$-th ancestor of $i$, beginning at the direct ancestor and ending at the root of the tree. For all $i\in Obs$ and for all $n\geq 1$ we have that
$$Z_n^i=\sum_{1\leq j\leq n} \sum_{1\leq p\leq |\e_j|}\|\e_{j_p}^i \|$$
where $\{\e^i_j\}_{j\geq 1}$ are i.i.d. Moreover, we have, for any $\lambda >0$
\begin{eqnarray*}
& &\!\!\!\!\!\!\!\!\!\!\!\mathbb{E}_{\lambda}\!\left[\sum_{p=1}^{|\e_j|}\|\e_{j_p}^i \|\right]
\stackrel{(a)}{=}\mathbb{E}_{\lambda}\!\!\left[\sum_{p=1}^{\sum_{q=1}^{|\e_j|}\|\e^{a_i^1}_{j_q}\|}\!\!\!\|\e_{j_p}^i \|\right]=\mathbb{E}_{\lambda}\left\{ \mathbb{E}_{\lambda}\left[\sum_{p=1}^{\sum_{q=1}^{|\e_j|} \|\e^{a_i^1}_{j_q}\|}\!\!\!\|\e_{j_p}^i \| \,\left| \sum_{q=1}^{|\e_j|} \|\e^{a_i^1}_{j_q}\|\right.\right]\!\right\}\\
&=&\mathbb{E}_{\lambda}\left[ \sum_{p=1}^{\sum_{q=1}^{|\e_j|} \|\e^{a_i^1}_{j_q}\|} \mathbb{E}_{\lambda}\left[\|\e_{j_p}^i \| \right]\right]\stackrel{(b)}{=} \mathbb{E}_{\lambda}\left[ \sum_{p=1}^{\sum_{q=1}^{|\e_j|} \|\e^{a_i^1}_{j_q}\|} 1\right]=\mathbb{E}_{\lambda}\left[ \sum_{q=1}^{|\e_j|} \|\e^{a_i^1}_{j_q}\|\right]\\
&=&\mathbb{E}_{\lambda}\left[ \sum_{q=1}^{|\e_j|} \|\e^{a_i^2}_{j_q}\|\right]=\dots=\mathbb{E}_{\lambda}\left[ \sum_{q=1}^{|\e_j|} \|\e^{a_i^{L-1}}_{j_q}\|\right]\stackrel{(c)}{=}1
\end{eqnarray*}
where (a) comes from the fact that $\|e_{j_p}^{i}\|\neq 0$ only for those $p$ such that $\e_{j_p}^{a^1_i}(1)=1$, and (b) comes from Lemma \ref{esp=1}. Finally, for any $i \in Obs$, $\sum_{q=1}^{|\e_j|} \|\e^{a_i^{L-1}}_{j_q}\|$ is just the number of descendants (conserved plus inserted nucleotides) of the nucleotide in the root in one of its direct children. The expectation of this quantity is again 1 by Lemma~\ref{esp=1}. The result holds from the strong law of large numbers.
\end{proof}\\

Finally, to prove that Theorem 2 also holds in the case in which we deal with an arbitrary tree, we need to show that for any $\theta \in \Theta_{marg}$ (same definition as in Section~\ref{Inf_Div}) and for any observed sequence $i$
$$ \pr\big(Z_n=(n_1,\dots,n_k),\,X^i_{1:n_i}=x^i_{1:n_i}\big)=\pr\big(Z_n=(n_1,\dots,n_k)\big)f^{\otimes n_i} (x^i_{1:n_i}).$$
But this can be easily shown from expression (\ref{arbicondi}) in the same way that in (\ref{re}).

Then the asymptotic results obtained in Section \ref{Inf_Div} are also valid when the phylogenetic tree has a general form. 

\section{Simulations}\label{simus}
We have considered for the simulations a 3-star phylogenetic tree,
the most simple non trivial example of multiple alignment.
The branches lengths, or evolutionary distance from the ancestral sequence to the observed sequences, are
set
to $1$ in all branches. Let us recall that this distance is not the real time of evolution between sequences but a measure
given in terms of the number of expected
evolutionary events per site. Indeed, under the TKF91 indel evolution model $\lambda t$ is the expected number of indels per site between two sequences at distance $t$.

The distribution of the hidden process has been taken to be the distribution of the homology structure under the TKF91 indel evolution model, that is, $\{\e_n\}_{n\geq 1}$ are independent and identically distributed as in (\ref{loi_e}). However, we have used the equivalent multiple-HMM (see for instance Hein {\em et al.}, 2003, and Figure \ref{mapa})
scheme to simulate the sequences. Indeed, in practice it is easier
to simulate from a finite state Markov chain than from our i.i.d.
variables on $\N^3$. The number of states for the Markov chain for
three sequences is 15 ($2^4-1$). The simulated sequences have been
used to compute the quantities $\ell(\theta)$ and $w(\theta)$. The
log-likelihood $\omega_n(\theta)$ has been computed with the
Forward algorithm for multiple-HMM (cf. Durbin {\em et al.}, 1998). Note that
this algorithm computes the log-likelihood by summing over all
possible alignments of the three sequences. However, since a
homology structure is just a set of alignments, this is equivalent
to sum over all possible homology structures, and the final
result is exactly $\omega_n(\theta)$. The time complexity for a
non-improved version of this algorithm is $O(15^2 n_1 n_2 n_3)$,
where $n_1$, $n_2$ and $n_3$ are the lengths of the observed
sequences. Computation of $\ell_n(\theta)$ is done with a modified
version of the Forward algorithm that takes into account the
length of the ancestral sequence. The time complexity grows now to
$O(15\, n\, n_1 n_2 n_3)$. This is the reason for having limited
the simulations to 3 sequences.

The emission distributions chosen for the simulations, $\{h_{J}\}_{J\subseteq \{1,2,3\}}$ and $f$, are defined by the substitution model described below.
\subsection{The substitution model}
For the whole simulation procedure we consider the following
pairwise markovian substitution model:
\begin{equation*}
p_t(x,y) =\left\{
  \begin{array}{ll}
 (1-e^{-\alpha t})\nu(y) & \text{ if } x\neq y\\
\{ (1-e^{-\alpha t})\nu(x) +e^{-\alpha t} \} & \text{
otherwise, }
  \end{array}
\right .
\end{equation*}
where $\alpha  >0$ is called the substitution rate, $t$ is the
evolutionary distance, and for every letter $x$, $\nu(x)$ equals the
equilibrium probability of $x$. This model is known as the Felsenstein81 substitution model (Felsenstein, 1981).
We will take $f(\cdot)=\nu(\cdot)$. We define the
emission function $h$ as
$$h_{J}((x_i)_{i\in J})=\sum_{R\in \A} \nu(R)\prod_{i \in J}p_{t_{i}}
(R,x_i)$$ for all $J \subseteq \{1,2,3\}$.

The equilibrium probability
distribution $\nu(\cdot)$ is assumed to be known and will not be  part of the
parameter. Then we have $f(\cdot)=f_0(\cdot)$. We will set it to $\{\frac{1}{4},\frac{1}{4},\frac{1}{4},\frac{1}{4}\}$
for the whole simulation procedure.
The unknown parameter is $\theta=(\lambda,\alpha)$.
\subsection{Simulation results}
We have computed the functions $\ell(\theta)$ and $w(\theta)$ for
two different values of $\theta_0$:
\begin{itemize}
\item $\lambda_0=0.02$, $\alpha_0=0.1$ and \item $\lambda_0=0.01$,
$\alpha_0=0.08$.
\end{itemize}
The substitution rate is much bigger than the insertion-deletion
rate and both are quite small, as expected by biologists.\\
\begin{figure}
\centering
\includegraphics[width=6.95cm]{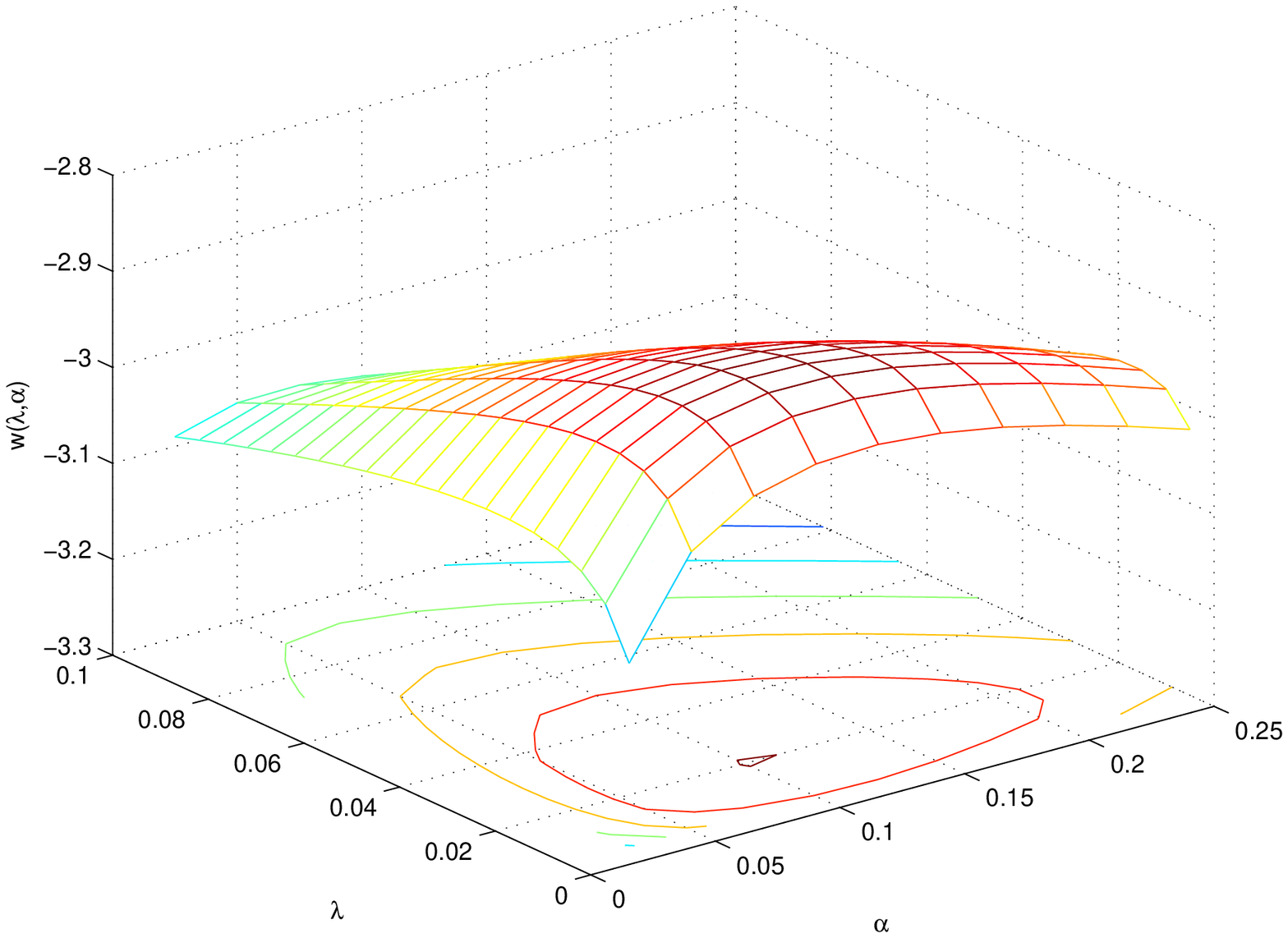}\hspace{-0.2cm}\includegraphics[width=6.95cm]{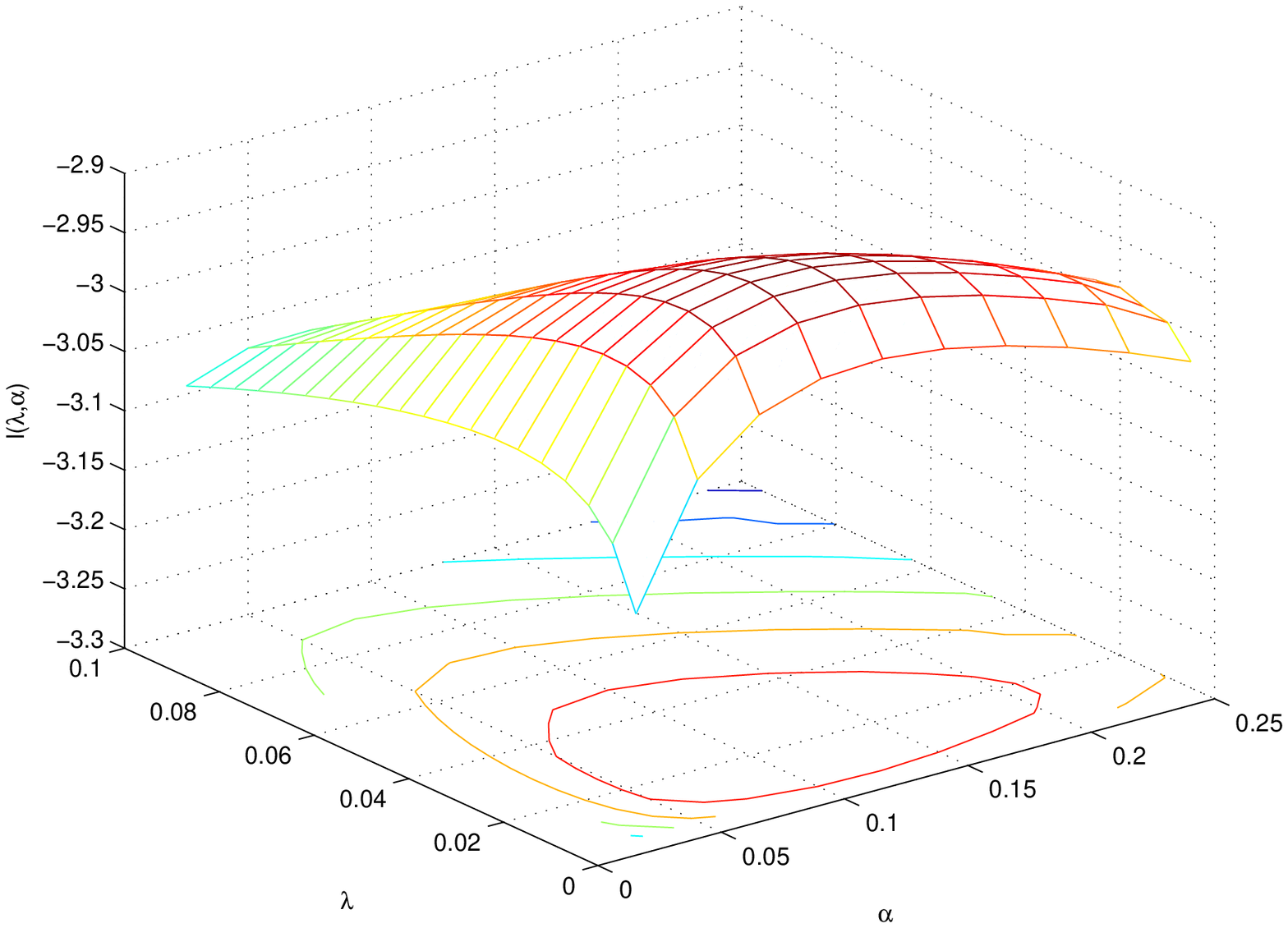}\vspace{1cm}\\
\includegraphics[width=13.9cm]{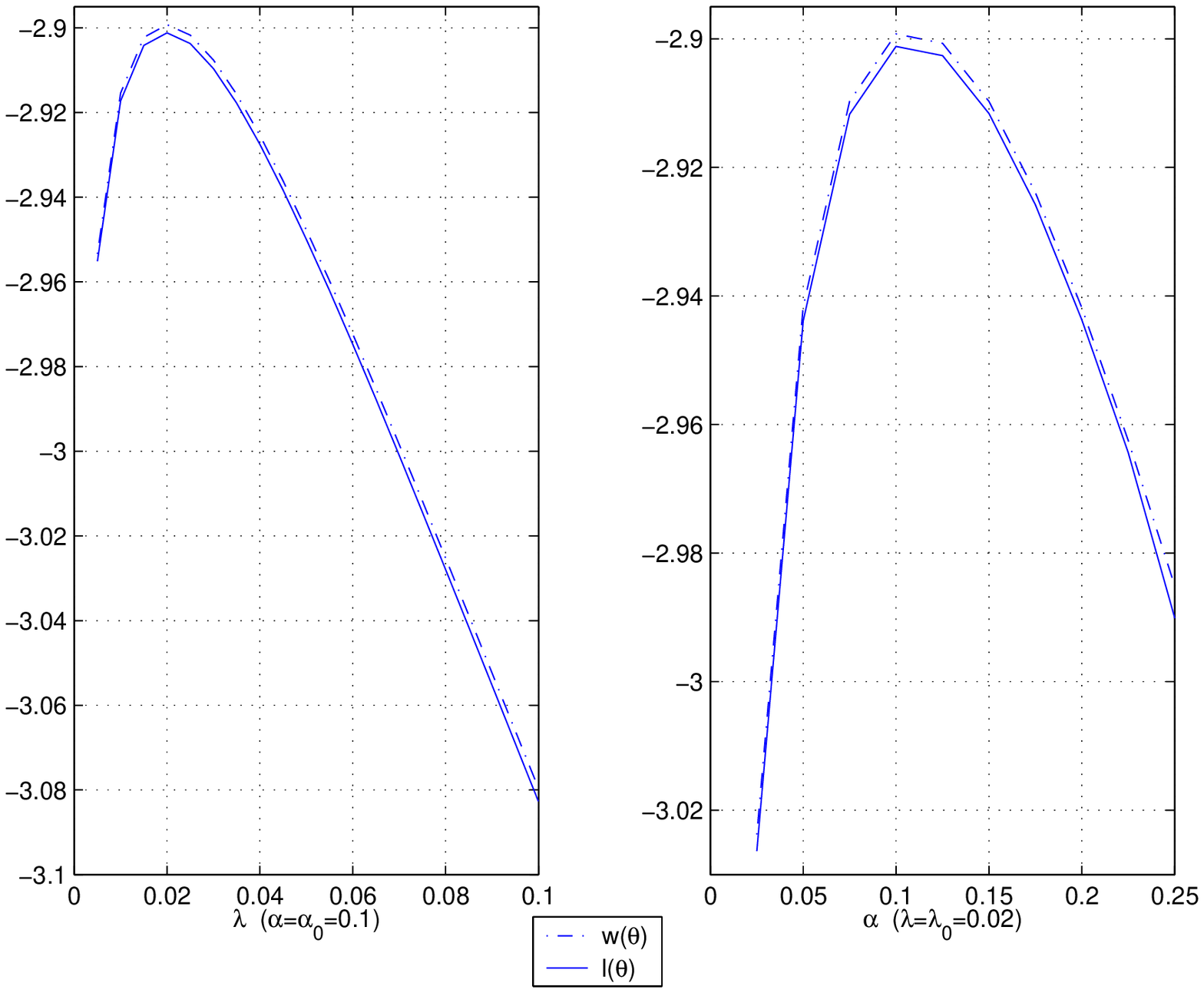}
 \caption{\emph{On top: $w$ and $\ell$ for parametrization
 ($\lambda_0=0.02,\alpha_0=0.1$). On bottom: cuts of $\ell$ and $w$ for
 $\alpha=\alpha_0$ fixed and for $\lambda=\lambda_0$ fixed.}}
\label{w}
\end{figure}
\begin{figure}
\centering
\includegraphics[width=6.95cm]{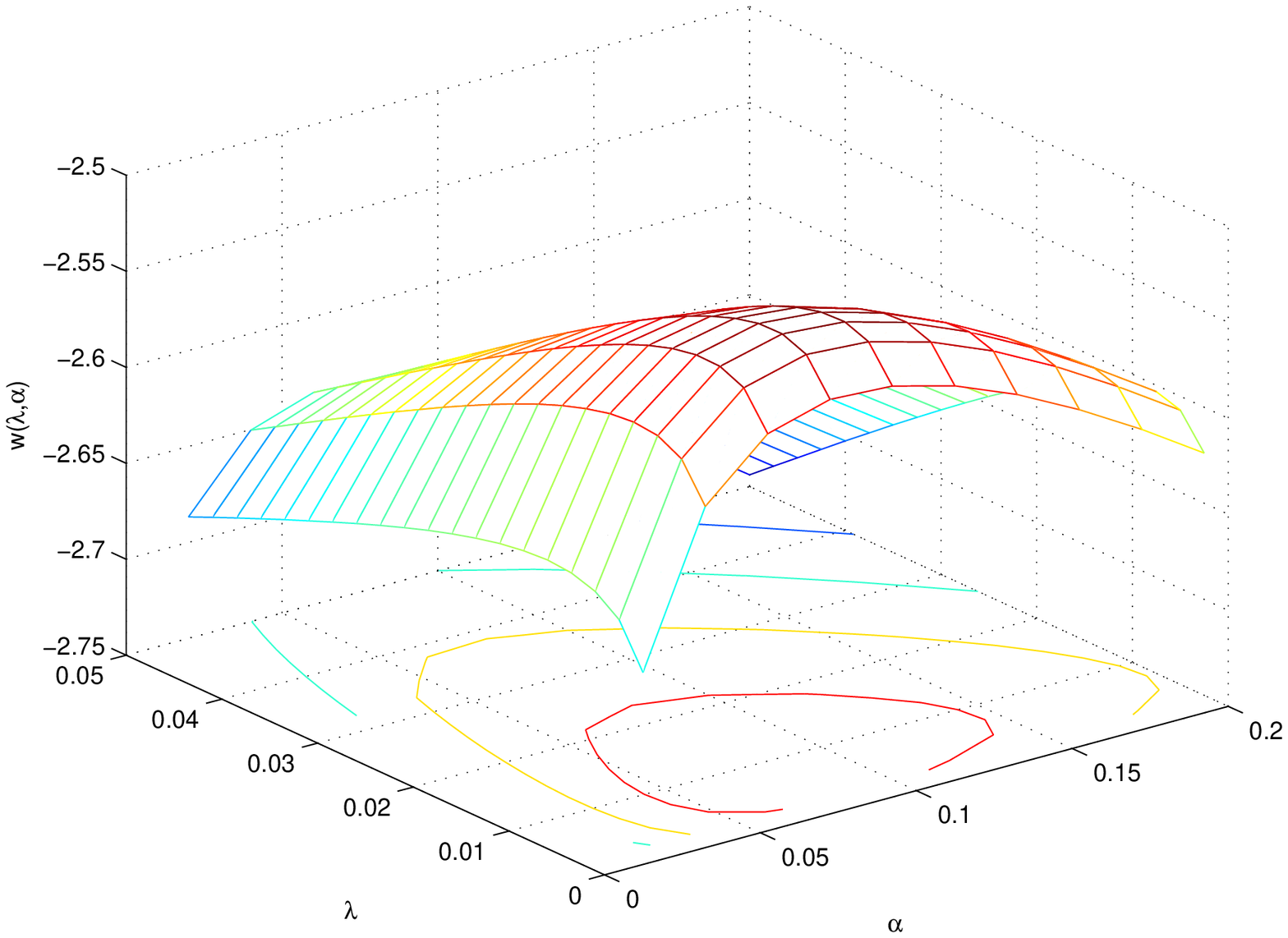}\hspace{-0.2cm}\includegraphics[width=6.95cm]{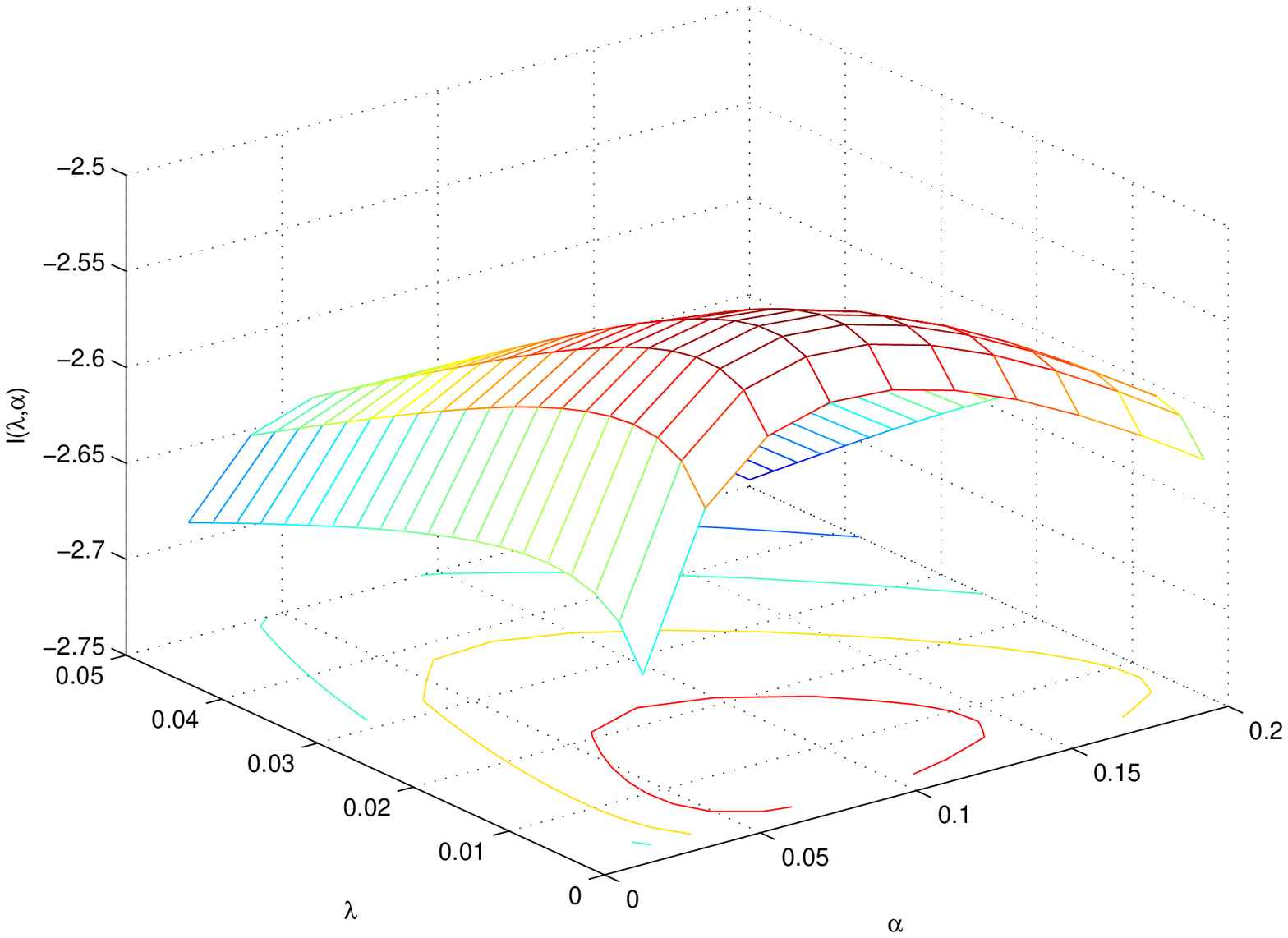}\vspace{1cm}\\
\includegraphics[width=13.9cm]{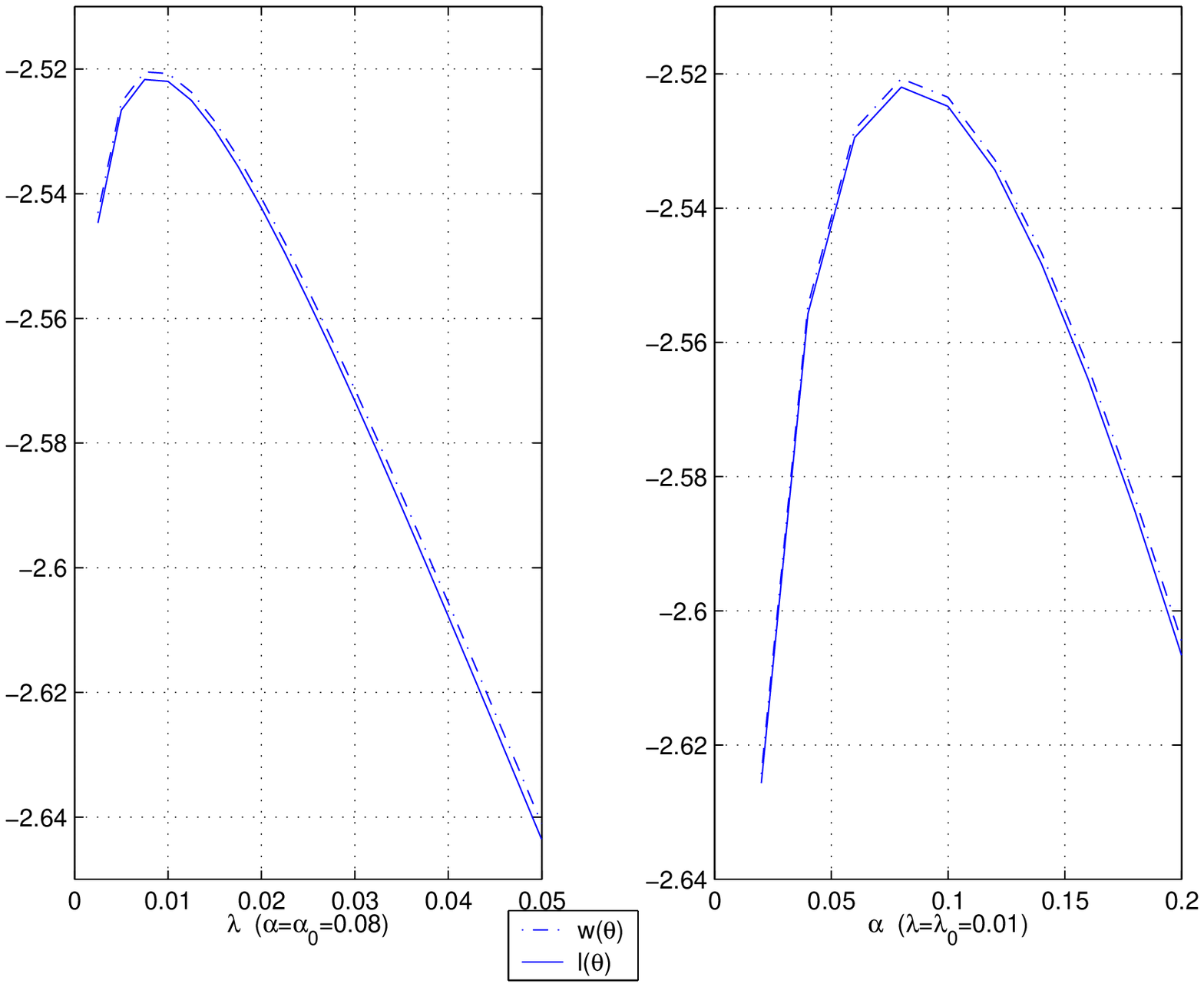}
 \caption{\emph{On top: $w$ and $\ell$ for parametrization
 ($\lambda_0=0.01,\alpha_0=0.08$). On bottom: cuts of $\ell$ and $w$ for
 $\alpha=\alpha_0$ fixed and for $\lambda=\lambda_0$ fixed.}}
\label{w2}
\end{figure}
The graphs of $\ell(\theta)$ and
$w(\theta)$ for these parameterizations are shown in Figures
\ref{w} and \ref{w2}. For the first parametrization we can see that $w(\theta)$ seems to take its
maximum at ($\lambda_0,\alpha_0$) (Figure \ref{w}, top left). For $\ell(\theta)$ this is not so evident.
Neither for any of the two functions for the second parametrization. However, when looking at the cuts
of $w(\theta)$ and $\ell(\theta)$ for $\alpha=\alpha_0$ and $\lambda=\lambda_0$ we appreciate that in
both parameterizations both seem to take their maximums near $\lambda_0$ and $\alpha_0$ respectively.
We remark that in the two examples, the functions $\ell(\theta)$ and
$w(\theta)$ are very close to each other.

\section{Discussion}
The main contribution of this work is to provide a probabilistic and statistical
background to parameter estimation in the multiple alignment of sequences based on a rigorous
model of evolution. We describe the homology structure of $k$ sequences related by a star-shaped phylogenetic
tree as a sequence of i.i.d. random variables whose distribution is determined by the evolution process. Given the observed sequences, the homology structure is a latent (non-observable) process. We formally define the latent variable model that {\em emits} the observed sequences, namely the multiple-hidden i.i.d. model. We discuss possible definitions of likelihoods in comparison with the quantities computed by multiple alignment algorithms. Our main results are given in Theorems~\ref{thdivergence2} and \ref{contrast2}, where we first prove the convergence of normalized log-likelihoods and identify cases where a divergence property holds. We then extend the definition of the model and the results obtained to the case of an arbitrary phylogenetic tree.

Despite the positive results that we obtain, it is not yet possible to validate the estimation of evolution parameters under the multiple-hidden i.i.d. model in every situation. However, the simulation studies that we present to investigate
situations that are not covered by Theorem \ref{contrast2} provide encouraging results.

\section*{Acknowledgments}
The author would like to thank Elisabeth Gassiat from Université Paris-Sud (France) and Catherine Matias from Génopole, CNRS (France), for fruitful advice and helpful comments. The author was partially supported by the Spanish Ministerio de Ciencia e Innovación through project ECO2008-05080 and by Comunidad de Madrid - Universidad Carlos III (Spain) through project CCG08-UC3M/HUM-4467.
\bibliographystyle{miestilo2}
\bibliography{biblio_mult_align}
\end{document}